\newif\ifnatbibsort\natbibsorttrue
\ifnatbibsort\RequirePackage[numbers,sort&compress]{natbib}\else\RequirePackage[numbers,compress]{natbib}\fi
\def\clock{{\count0=\time
		\divide\count0 60
		\ifnum\count0<10 0\fi\the\count0
		\multiply\count0 -60 \advance\count0 \time
		:\ifnum\count0<10 0\fi \the\count0
}}
\newcommand{\timestamp}{{\small\vbox{\hbox{\tt\jobname.tex}
			\hbox{\the\day/\the\month/\the\year, \clock}}}}
\newcommand{\Ecal}{{\cal E}}
\newcommand{\Ucal}{{\cal U}}
\newcommand{\Ccal}{{\cal C}}
\newcommand{\bea}{\begin{eqnarray}}
\newcommand{\eea}{\end{eqnarray}}
\newcommand{\bs}[1]{\boldsymbol{#1}}
\newcommand{\xv}{{\bs{x}}}
\newcommand{\yv}{{\bs{y}}}
\DeclareMathOperator{\diag}{diag}
\newcommand{\be}{\begin{equation}}
\newcommand{\ee}{\end{equation}}
\newcommand{\Mtilde}{\widetilde{M}}
\let\old@startsection=\@startsection
\let\oldl@section=\l@section
\renewcommand{\@startsection}[6]{\old@startsection{#1}{#2}{#3}{#4}{#5}{#6\mathversion{bold}}}
\renewcommand{\l@section}[2]{\oldl@section{\mathversion{bold}#1}{#2}}
\numberwithin{equation}{section}
\def \be {\begin{equation}}
\def \ee {\end{equation}}
\def \ba {\begin{array}}
\def \ea {\end{array}}
\def \bea{\begin{eqnarray}}
\def \eea{\end{eqnarray}}
\def \Tr {{\textrm{Tr}}}
\def \diag {{\textrm{diag}}}
\def \and {{~\textrm{and}~}}
\newcommand{\half}{\frac{1}{2}}
\newtheorem{thm}{Theorem}
\begin{document}
	\renewcommand{\thefootnote}{\arabic{footnote}}

	\overfullrule=0pt
	\parskip=2pt
	\parindent=12pt
	\headheight=0in \headsep=0in \topmargin=0in \oddsidemargin=0in

	\vspace{ -3cm} \thispagestyle{empty} \vspace{-1cm}
	\begin{flushright} 
		\footnotesize
		\textcolor{red}{\phantom{print-report}}
	\end{flushright}

%
	
\begin{center}
	\vspace{.0cm}
	
	{\Large\bf \mathversion{bold}
	Sequences of resource monotones \\
	 from modular Hamiltonian polynomials}

	\vspace{0.8cm} {
		Ra\'ul Arias$^{\,a,b}$,
		Jan de Boer$^{\,c}$,
		Giuseppe Di Giulio$^{\,d}$,
		\\
		Esko Keski-Vakkuri$^{\,e,f,g}$
		and Erik Tonni$^{\,h}$
	}
	\vskip  0.5cm
	
	\small
	{\em
		$^{a}\,$Instituto de F\'isica de La Plata, CONICET\\
Diagonal 113 e/63 y 64, CC67, 1900 La Plata, Argentina
		\vskip 0.05cm
		$^{b}\,$Departamento de F\'isica, Universidad Nacional de La Plata,\\ Calle 49 y 115 s/n, CC67, 1900 La Plata, Argentina
		\vskip 0.05cm
		$^{c}\,$Institute for Theoretical Physics and Delta Institute for Theoretical Physics, University of Amsterdam, \\
		PO Box 94485, 1090 GL Amsterdam, The Netherlands
		\vskip 0.05cm
		$^{d}\,$Institute for Theoretical Physics and Astrophysics and W\"urzburg-Dresden Cluster of Excellence ct.qmat, Julius-Maximilians-Universit\"at W\"urzburg, Am Hubland, 97074 W\"{u}rzburg, Germany

		\vskip 0.05cm
		$^{e}\,$Department of Physics, University of Helsinki \\
		PO Box 64, FIN-00014 University of Helsinki, Finland
                     \vskip 0.005cm 
                      $^{f}\,$Helsinki Institute of Physics \\
                      PO Box 64, FIN-00014 University of Helsinki, Finland
		\vskip 0.05cm
                      $^{g}\,$InstituteQ - the Finnish Quantum Institute, University of Helsinki, Finland
		\vskip 0.05cm
		$^{h}\,$SISSA and INFN Sezione di Trieste, via Bonomea 265, 34136, Trieste, Italy 
	}
	\normalsize

\end{center}

\vspace{-0.3cm}
\begin{abstract} 
We introduce two infinite
sequences of entanglement monotones, which are constructed from expectation values of polynomials in the modular Hamiltonian. 
These monotones yield  infinite sequences of inequalities  that must be satisfied in majorizing state transitions. 
We demonstrate this for information erasure, deriving an infinite sequence of "Landauer inequalities" for the work cost, 
bounded by linear combinations of expectation values of powers of the modular Hamiltonian. These inequalities give improved lower bounds for the 
work cost in finite dimensional systems, and depend on more details of the erased state than just on its entropy and variance of modular Hamiltonian.   
Similarly one can derive lower bounds for marginal entropy production for a system
coupled to an environment.  These infinite sequences of entanglement monotones also give rise to relative quantifiers that
are monotonic in more general processes, namely those involving so-called $\sigma$-majorization with respect to a fixed point full rank state $\sigma$; such quantifiers are called resource monotones. As an application to thermodynamics, one can use them to derive finite-dimension corrections
to the Clausius inequality.
Finally, in order to gain some intuition for what (if anything) plays the role of majorization in field theory, 
we compare pairs of states in discretized theories at criticality and study how majorization depends on the size of the bipartition with respect to the size of the entire chain.

\end{abstract}

%

\newpage

\tableofcontents

\newpage


\section{Introduction}

The von Neumann entropy $S(\rho )=-\Tr (\rho \ln  \rho )$ of a quantum state $\rho$ is perhaps the best known quantifier of quantum information. The same can be said about its application to a reduced density 
matrix $\rho_A=\Tr_{B}(\rho)$ of a global state $\rho$ (with $\Tr_A  (\rho_A) $ =1) for a bipartite entangled quantum system $A\cup B$, the entanglement entropy $S_A(\rho_A)$.  Also widely known are the 
\noindent
R\'enyi entropies
\be
\label{defRenyi}
S^{(n)}=\frac{1}{1-n}\ln\textrm{Tr}\rho^n\,,
\ee
and the limit $n\to 1$ relating them to the Von Neumann entropy
\be
\label{defentropy}
S \equiv
\lim_{n \to 1} S^{(n)}
=
-\,\partial_n \big( \textrm{Tr}\rho^n \big)\big|_{n=1}
=
-\,\partial_n \big( \ln \textrm{Tr}\rho^n \big)\big|_{n=1}
=
-\,\textrm{Tr}(\rho\ln\rho ) \ .
\ee
The definition of $S(\rho)$ allows to interpret it as the expectation value of the Hermitean operator $K=-\ln \rho$, well defined since $\rho$  is hermitian and positive definite.  The standard normalization is\footnote{With this normalization, 
{\em e.g.} for a thermal state $e^{-\beta H}/Z$ we have $K=\beta H + \ln Z$.} $\Tr ( \rho ) = \Tr (e^{-K})=1$. Following the convention from \cite{Haagbook}, we call $K=-\ln \rho$ the {\em modular Hamiltonian}\footnote{In quantum information theory $-\ln \rho$ is sometimes called {\em surprisal}, see {\em e.g.} \cite{boes2020}. This name is inherited from its classical equivalent $-\ln p_i$ in the context of a classical discrete
probability distribution ${p_i},\ \sum_i p_i=1$: an outcome $i$ with a very low probability is associated with a large surprise -- consider for example
winning the big prize in a lottery. In a quantum version, using a diagonal basis $\rho = \sum_i p_i |i\rangle \langle i|$, if a measurement in the eigenbasis gives as a result an eigenvalue $i$
with very low probability, the surprise is large. Surprisal is also naturally associated with information: obtaining a rare measurement outcome can be interpreted to reveal a large amount
of information about the possible eigenvalues. Indeed another common name for surprisal is {\em information content.} The von Neumann entropy is thus the expected or average value of surprisal or revealed information, associated with measurement 
in the eigenbasis.}.
In the context of a bipartite system and entanglement, the Hermitean operator $K_A\equiv -\ln\rho_A $, derived from a reduced density matrix $\rho_A$, is also
often called
the {\em entanglement Hamiltonian}. The entanglement entropy $S_A$ is the average (or the first moment) of the entanglement Hamiltonian, namely $S_A=\langle K_A \rangle$, where the mean value is evaluated through $\rho_A$.

In addition to the first moment of modular Hamiltonian, it is natural to explore its higher moments or cumulants as well. The second cumulant, the variance of modular Hamiltonian, is a much
less-known quantity, and therefore even its name varies in the literature. It is also known as entropy variance, varentropy, and in the context of entanglement in many-body physics and quantum field theory, as {\em capacity of entanglement} $C_A(\rho_A )$. In the latter context, it was introduced in \cite{Yao:2010woi} and \cite{schliemann}, first with a definition modeled after that
of a heat capacity, and proposed to detect different phases in topological matter. Since heat capacity is related to the variance of thermodynamical entropy, it was realized that
capacity of entanglement is equal to the variance of the modular Hamiltonian, and can also be derived from the R\'enyi entropies  \cite{Perlmutter:2013gua, deBoer:2018mzv, Nakaguchi:2016zqi}
\be
\label{defcapacity}
C_A\,=\,
\partial^2_n \big( \ln \textrm{Tr}\rho_A^n \big)\big|_{n=1}
=\,
\partial_n^2 \big( \textrm{Tr}\rho_A^n \big)\big|_{n=1}
-
\big[ \partial_n \big( \textrm{Tr}\rho_A^n \big)\big]^2 \big|_{n=1}
=
\,\langle K_A^2 \rangle-\langle K_A \rangle^2 \, .
\ee
%
One of the reasons the variance of the modular Hamiltonian or capacity of entanglement is less-known, is that as a quantifier it is not known to satisfy many interesting properties, unlike the von Neumann 
entropy does. Some of its uses in quantum information theory, which we are aware of, are in a finite-size correction to the Landauer inequality or more generally in bounding the increase in entropy
in state transitions between majorizing states \cite{Reeb_2014, boes2020}, in the analysis of catalytic state transformations \cite{boes2020}, and in state interconvertibility in
finite systems \cite{Chubb2018}.

This work is primarily motivated by \cite{boes2020}, which considered the capacity of entanglement (there called variance of surprisal) and more generally the relative variance 
\be
\label{def_relative variance}
  C(\rho || \sigma ) = \Tr [\rho (\ln \rho - \ln \sigma)^2 ] - S(\rho ||\sigma )^2 \ ,
\ee
where $S(\rho ||\sigma )\equiv \Tr [\rho (\ln \rho - \ln \sigma) ]$ is the relative entropy. When we consider $\sigma$ in (\ref{def_relative variance}) to be the maximally mixed state $C(\rho || \sigma )$ reduces to $C(\rho)$. While both the relative variance and the relative entropy have applications in the independent and identically distributed setting involving many copies of a system or operations, the authors of \cite{boes2020} explored their role in 
a single-shot setting where an operation or protocol is executed only once in one system, proving many new results for the variance and relative variance. Many properties were
based on  a new quantifier\footnote{Ref. \cite{boes2020} uses the convention where definitions involve the binary logarithm $\log_2(x)$ instead of $\ln (x)$. 
We prefer to follow the physics convention and use the natural logarithm in definitions. Since $\ln(x) = (\ln 2)\log_2 (x)  $, denoting below quantities defined with $\log_2$ by tildes ({\em e.g.} $\tilde{S}=-\Tr [\rho \log_2 \rho]$), we have
$S = (\ln 2)\tilde{S},\ C= (\ln 2)^2\tilde{C} $ and $M=(\ln 2)^2\tilde{M}$ with 
\be
\tilde{M}(\rho) = \tilde{C}(\rho) + \bigg(\tilde{S}(\rho) + \frac{1}{\ln 2}\bigg)^2 \ ,
\ee 
as given in \cite{boes2020}.}
\be\label{Mdefinition}
 M(\rho ) = C(\rho) + \left[S(\rho) + 1 \right]^2  \ ,
\ee   
which was shown to be Schur concave, and its relative version which was shown to be a resource monotone. The quantities $S,C,M$ were shown to be connected by an inequality\footnote{This inequality has the same form with the $\log_2$-conventions:
$$
\tilde S(\rho ) -\tilde S(\sigma ) \geq \frac{\tilde C(\sigma )-\tilde C(\rho)}{2\sqrt{\tilde M(\sigma)}}\, .
$$
}
\be\label{Landauer-ineq}
S(\rho ) -S(\sigma ) \geq \frac{C(\sigma )-C(\rho)}{2\sqrt{M(\sigma)}}\, ,
\ee
when the majorization order $\sigma \succ \rho$ holds for two states in a system with a finite dimensional Hilbert space. As an application, \cite{boes2020} considered {\em e.g.} information erasure, deriving
a new lower bound for the associated work cost that involves both entropy and variance. Related results were proven for the relative quantifiers. 
In the end they  posed a question whether it is possible to extend this construction of $M$ to a sequence of Schur concave quantifiers that would have similar properties and involve higher cumulants than $C$, perhaps also likewise
for the relative quantifiers. 

In state conversions involving a majorization order between the initial state $\rho$ and the final state $\sigma$, the whole spectrum of eigenvalues is affected, and the majorization order itself can be defined by a sequence of inequalities. 
Properties of the spectra can be characterized by various quantities, such as R\'enyi entropies, or moments and cumulants of
$-\ln \rho$ and $-\ln \sigma$. Hence it is natural to expect that majorization order may imply a sequence of inequalities involving changes in cumulants beyond the first two. To derive such inequalities,  we construct two sequences of entanglement monotones, that can be expanded as combinations of cumulants.
We first generalize the construction (\ref{Mdefinition}) and define the {\em moments of the shifted modular Hamiltonian} 
\be
M^{(n)}(\rho;b_n) = \Tr [\rho (-\ln \rho+ b_n)^n] -b^n_n\,,
\ee
for $n\geq 1$ (with $M^{(2)}(\rho ;b_n = 1)=M (\rho)-1$). Explicit formulas for their expansions by higher cumulants are given in Sec.\,\ref{sec:resource_monotones}. For the parameter range $b_n\geq n-1$ all of them are concave (see Sec.\,\ref{subsec:concavity} for the relevant definitions), hence, from the Vidal's theorem \cite{Vidal}, they are pure state entanglement monotones (where $\rho$ and $\sigma$ are the reduced density matrices of global pure states $\ket{\psi}$ and $\ket{\phi}$ respectively), thus yielding inequalities
\be\label{Minequality}
M^{(n)}(\rho ;b_n) \geq M^{(n)}(\sigma ;b_n)\,, 
\ee
 in local operations assisted with classical communication (LOCC) and other majorizing state transformations\footnote{For LOCC transformations of mixed states, we need to apply the convex roof extension to $M^{(n)}$ (see (\ref{convexroofextension})).} with $\sigma \succ \rho$.
For example, at second order $n=2$ with $b_n=1$ we obtain the inequality
\be
\label{inequality_new n2}
S(\rho)-S(\sigma) \geq \frac{C(\sigma )-C(\rho )}{S(\rho )+S(\sigma )+2}\,,
\ee
which is slightly sharper than the inequality (\ref{Landauer-ineq}).
We also show how to  calculate the moments $M^{(n)}$ from  R\'enyi entropies, by using the latter as a generating function. 
The R\'enyi entropies $S^{(n)}$ are not concave (for index value $n > 1$), hence our observation gives a way to repackage their information to an infinite sequence of concave quantifiers, that define entanglement monotones. 

Next we identify a basis for monotones which are polynomial in moments of $-\log \rho$ which allows to construct another infinite sequence that we call {\it extremal polynomial monotones} $P^{(n)}_E$ (see Sec.\,\ref{subsec:ExtremalMonotones}). All  $-P^{(n)}_E$ are also concave, hence define monotones, and moreover any concave polynomial can we written as a linear combination of extremal polynomial monotones with non-negative coefficients. 
We therefore believe that
they provide the tightest inequalities of this type in majorizing state transformations. For example, given two majorizing states $\rho\succ\sigma$, the third monotone $P^{(3)}_E$ yields the inequality
\be\label{Ineq3compact_intro}
\Delta M_3 \geq 3 \Delta M_2 + \frac{3}{4} \frac{(\Delta M_2)^2}{\Delta M_1}\,,
\ee
and the fourth order $P^{(4)}_E$ the inequality
\be\label{P4intro}
\Delta M_4 \geq 8 \Delta M_3 -6\Delta M_2 + \frac{8}{9}\frac{(\Delta M_3-3\Delta M_2)^2}{\Delta M_2} \, ,
\ee
where $\Delta M_n \equiv  M^{(n)} (\sigma ; n-1) -  M^{(n)} (\rho ; n-1)$. Notice that (\ref{Ineq3compact_intro}) and (\ref{P4intro}) are stronger than the inequalities   $\Delta M_3\geq 0$ and $\Delta M_4 \geq 0$ obtained from (\ref{Minequality}). 

As an application of these new inequalities  we first consider information erasure: we obtain infinite sequences of "Landauer inequalities" for the work cost, bounded by arbitrarily high cumulants of the modular Hamiltonian of the initial state $-\ln\rho$ to be erased, extending the previous result of \cite{boes2020}, that involves only the variance. We also derive a slightly sharper inequality for marginal entropy production applying a unital
quantum channel to a system and environment, and outline steps for deriving an infinite sequence of inequalities. 

For relative quantifiers, we  first generalize a  theorem proven in \cite{boes2020}, to show how one can construct an infinite class of resource monotones, relative quantifiers based on a concave quantifier $E(\rho)=\Tr[\rho F(\ln \rho)]$. We then
apply this construction to the monotones $M^{(n)}$ and $-P^{(n)}_E$, obtaining infinite sequences of resource monotones that involve cumulants of $\ln \rho -\ln \sigma$. An important restriction is that the results only apply in the "classical" case where $\rho$ and $\sigma$ commute. The sequences imply inequalities for relative entropy production bounded
by changes in the relative cumulants; we consider two examples more explicitely. In particular, as an application to quantum thermodynamics, we derive a finite-size correction to the Clausius inequality:
\be\label{Clausiusintro}
 S (\gamma_\beta )-  S (\rho) 
\geq \frac{1}{k_BT}[\langle H\rangle_{\gamma_\beta} - \langle H \rangle_\rho] +\frac{C(\rho ||\gamma_\beta )}{2+ 2\beta (E_{\rm max} -F(\beta))}\ ,
\ee
where $\rho$ is a non-equilibrium state commuting with the equilibrium thermal state $\gamma_\beta = \exp (-\beta H)/Z(\beta)$,  $E_{\rm max}$ is the maximum energy eigenvalue of the finite-dimensional system, and $F(\beta)=-\beta^{-1}\ln Z(\beta)$ is the Helmholtz free energy.
Here we refer to finite-size corrections as corrections which vanish when the dimension of the Hilbert space of the system is infinite. They should not be confused with the corrections related to the occurrence of a finite volume.

The above results apply to majorization in finite dimensional systems. In quantum field theory, less is known about majorization. There have been studies
investigating ground state entanglement and the behaviour of majorization in the reduced density matrix in a subsystem under renormalization group flow and scaling transformations  \cite{Vidal:2002rm,latorre2003ground,Latorre:2004pk,Orus:2005jq,Riera:2006vj,Calabrese-Lefevre,Swingle:2013zla,Mintchev:2022xqh}. In this work our interest is in the possibility of majorization between a pair of states in a quantum field theory. We explore this question to gain some insight by considering pairs of states in 1+1 dimensional
conformal field theories (CFT),  in particular free theories such as a compact boson and a Dirac fermion. As a pair of states we take the ground state and an excited state, then discretize the theory and map it to a fermionic chain, where
we find the corresponding pair of states (yielding back to the CFT states in a continuum limit).  Furthermore, we take the theory to live on a circle with periodic boundary conditions. The discrete fermionic chain then has a finite dimensional Hilbert space, so we can use
the standard definition of majorization. We bipartite the theory into a line segment and its complement and ask if a majorization order exists between the pair of pure states. The majorization condition involves reduced states, 
which depends on the bipartition, and thus on the relative  size of the subsystem. While it is laborious to directly verify the majorization conditions, it is simpler to show them to be violated by comparing entanglement monotones or Schur concave
quantifiers for the pair of states and test if the majorization-implied inequality is falsified for any monotone. In the final part of this work we perform such comparisons, between the ground state and an excited state in a CFT and the 
corresponding pair in the periodic fermionic chain. We consider the entanglement entropy, the R\'enyi entropies $S^{(2)},\, S^{(3)}$ and the monotone $M^{(2)}$, and compare which quantity gives the most stringent bound ruling
out majorization in the largest range of bipartition.

This paper is organized as follows. In Sec.\,\ref{subsec:concavity} we first review some of the relevant basic concepts of quantum information theory.  In Sec.\,\ref{sec:resource_monotones} we introduce two sequences of entanglement monotones: the moments of shifted modular Hamiltonian and the extremal polynomial monotones. We then generalize a theorem in \cite{boes2020}, allowing us to construct infinite sequences of relative quantifiers which are also resource monotones. As an application,
we derive a finite-size correction to the Clausius inequality. In Sec.\,\ref{sec:applications} as an application of the new entanglement monotones, we consider Landauer erasure, and derive infinite sequences of inequalities for the work cost of the erasure process, involving arbitrarily high cumulants of the  state to be erased. We also discuss bounds on marginal entropy production in a system coupled to an environment. We then start an initial exploration of state majorization in CFTs in Sec.\,\ref{CFT}.
We compute the quantities $S,C, M$ in some simple free CFTs for the ground state and excited states, and in corresponding discretized fermionic chains, and we examine the monotonicity of some Schur concave quantifiers as proxies for state majorization, as outlined in Sec.\,\ref{sec:applications}.  We conclude with a discussion and a description of various open problems.


\section{Some concepts of quantum information theory}
\label{subsec:concavity}


For the benefit of readers who are less familiar with some of the relevant concepts of quantum information theory, we briefly review some relevant background material. 

In this work our focus is on bipartite systems $A\cup B$, where the Hilbert space is decomposed as $\mathcal{H}_{A\cup B}= \mathcal{H}_A\otimes\mathcal{H}_B$. For a pair of quantum states described by the density matrices $\rho$ and $\sigma$, we
first review the important concept of majorization (partial) order. Consider a pair of vectors $\boldsymbol{\lambda},\boldsymbol{\kappa}\in \mathbb{R}^d$ and assume them to be ordered so that the components satisfy $\lambda_1\geq \lambda_2 \geq \cdots \geq \lambda_d$ and likewise for $\boldsymbol{\kappa}$.  
We say that $\boldsymbol{\lambda}$ majorizes $\boldsymbol{\kappa}$, and denote $\boldsymbol{\lambda}\succ \boldsymbol{\kappa}$, when \cite{GeomQuantumStates_book}
\bea
\label{majorisation}
\sum_{k=1}^m \lambda_k &\geq& \sum_{k=1}^m \kappa_k\,,
\,\,\qquad\,\,
\forall \ m=1,2,\dots,  d\ .
\eea
Majorization defines a partial order in $\mathbb{R}^d$, and the definition easily extends to the case $d \to \infty$, where we have a countably infinite number of inequalities to satisfy.
We then define majorization between two density matrices $\rho_1$ and $\rho_2$: 
$\rho_1\succ \rho_2$ when $\boldsymbol{\lambda}_1\succ \boldsymbol{\lambda}_2$,
where $\boldsymbol{\lambda}_j$ is the ordered vector of eigenvalues of $\rho_j$. Note that the inequalities (\ref{majorisation}) become trivial for any pair of pure states. However, in a bipartite system, 
and when this partition is kept fixed, one can define a non-trivial majorization partial order for pure states. Consider a pair of pure states $|\psi \rangle ,|\phi \rangle \in \mathcal{H}_{AB}$, and define majorization following
that of the reduced density matrices,
\be
    |\psi \rangle \succ |\phi \rangle
    \qquad \Leftrightarrow 
    \qquad\Tr_B (|\psi \rangle \langle \psi |) \succ \Tr_B (|\phi \rangle \langle \phi |)  \ .
\ee
Note that it does not matter whether the partial trace is taken over $B$ or $A$ because the resulting reduced density matrices in the two cases have the same eigenvalues, from the Schmidt decomposition. However, we emphasize that the definition depends on the choice of
the bipartition $A\cup B$, and it would be more accurate to denote it by $\ket{\psi} \succ_{A B} \ket{\phi}$:  an alternative bipartition $A'\cup B'$ in general leads to a partial order $\succ_{A'B'}$ among bipartite pure states which is not equivalent with $\succ_{AB}$.

We will be interested in quantities that are monotonic under majorization. First, a function $g$ mapping a density matrix to a real number is said to be {\it Schur concave}, when, for any pair of density matrices, we have
\be
\rho \succ \sigma 
\qquad\Rightarrow
\qquad
 g(\rho )\leqslant g(\sigma) \ . 
\ee
Conversely, $g$ is Schur convex if $-g$ is Schur concave \cite{GeomQuantumStates_book}.
A stronger property is  
concavity, which is a crucial property we employ to construct entanglement monotones.
We say that $g$ is  {\it concave}, when, for any $0\leqslant p\leqslant1$ and for any pair of operators $\rho$ and $\sigma$, we have
\be
\label{def concavity}
g\big(p\rho +(1-p)\sigma )\geq p g(\rho )+(1-p)g(\sigma )\,.
\ee
Conversely, a function $g$ is called convex if $-g$ is concave. A way to construct concave quantities is to begin with a function $f: [0,1]\rightarrow \mathbb{R}$. 
By applying the function to an operator $\rho$, we obtain another operator $f(\rho )$. Assume that all the involved operators are diagonalised by unitaries $\rho =U\textrm{diag}(\lambda_i)U^\dagger$,
we have that $f(\rho)=U\textrm{diag}(f(\lambda_i))U^\dagger$, where we assume that $f$ is well defined for all $\lambda_i$.  We can now define a function $g$ through $f$ as
\be 
\label{def Tr funct}
g(\rho )=\textrm{Tr}\big[f(\rho )\big]\,.
\ee
 It has been proven that if $f$ is concave (convex) as a single real variable function, then $g$ defined in (\ref{def Tr funct}) is concave (convex) according to the definition (\ref{def concavity}) \cite{GeomQuantumStates_book}.  For example, when $f(x)=-x\ln x$ and $\rho $ is a density matrix, 
 the function defined in (\ref{def Tr funct}) is the von Neumann entropy.
Since $f(x)=-x\ln x$ is concave in $[0,1]$, the von Neumann entropy is concave.  

As we said, concavity is a stronger property than Schur concavity. Any function that is symmetric in its arguments (such as $g$ defined in (\ref{def Tr funct}) by the trace, it  is symmetric {\em i.e.} invariant with respect to permutations of $\lambda_i$) and concave is also Schur concave, while the opposite does not hold \cite{GeomQuantumStates_book}. For example the R\'enyi entropies $S^{(\alpha)}=\frac{1}{1-\alpha} \ln \Tr (\rho^\alpha)$ are Schur concave when $\alpha >0$, but concave only when $0< \alpha \leq 1$. Also the
so-called   min and max entropies are Schur concave but not concave. 
On the other hand, 
 the von Neumann and Tsallis\footnote{Given a density matrix $\rho $ and $q>1$ the Tsallis entropy is 
 $S_{q}=\frac{1}{1-q}\big[\textrm{Tr}\rho ^q-1\big]$. In the limit $q\to 1$ the Tsallis entropy gives the von Neumann entropy.} entropies are 
both concave and Schur concave (see \cite{Hu2006} for a demonstration in the context of unified entropies).  

A general way to define quantum operations $\Ecal$ mapping an input state $\rho$ to an output state $\Ecal (\rho)$ is by the operator-sum representation\footnote{Another equivalent way is by the Stinespring dilation theorem, introducing
an environment system $\mathcal{H}_\mathrm{E}$ in a reference state $\rho_\mathrm{E}$ and then represent $\Ecal (\rho )=\Tr_\mathrm{E} [U \rho \otimes \rho_\mathrm{E} U^\dagger ]$ where $U$ can be chosen to be a unitary operator acting in the composite system $\mathcal{H}\oplus \mathcal{H}_\mathrm{E}$
and the partial trace is taken over $\mathcal{H}_\mathrm{E}$. }
\be
  \Ecal (\rho) = \sum_i K_i \rho K^\dagger_i \,,
\ee
with a collection of Kraus operators $\{ K_i \}$ that satisfy the condition 
\be
 \sum_i K^\dagger_i K_i \leq \mathbb{1}\,,
\ee
where, for a Hermitian $A$, $A\leq \mathbb{1}$ means that $\mathbb{1}-A$ has only non-negative eigenvalues. If the stronger condition $\sum_i K^\dagger_i K_i = \mathbb{1}$  applies, then $\Ecal$ is trace-preserving and it is called a quantum channel.
Every quantum channel has a fixed point $\sigma_* = \Ecal (\sigma_*)$ \cite{NielsenChuangBook}. If the fixed point is the unit matrix, $\Ecal (\mathbb{1})=\mathbb{1}$, the operation $\Ecal$ is called unital channel. In this case the Kraus operators satisfy the additional condition
\be
\sum_i K_iK^\dagger_i = \mathbb{1} \ .
\ee
For us, an important feature of unital channels is that by Uhlmann's theorem \cite{Uhlmann-thm, watrous_2018} they imply majorization between the input and output states,
\be\label{Uhlmann-theorem}
     \rho \succ \Ecal (\rho) \ .
\ee
A simple proof (in English) of Uhlmann's theorem, based on the Hardy-Littlewood-P\'olya theorem of majorization (which establishes that $\boldsymbol{\lambda} \succ \boldsymbol{\mu}$ iff there exists a bistochastic matrix $T$ such that  $\boldsymbol{\mu} = T\boldsymbol{\lambda}$),  can be found in \cite{KriSpe} (in Appendix B therein). The converse is also true in the sense that if $\rho \succ \sigma$ there exists a unital channel with $\sigma= \Ecal (\rho)$.

Another well-known class of operations are the LOCC. We can think of a LOCC as a process where quantum operations are performed by the two parties $A$ and $B$ separately, while classical communication allows the two parties to correlate their action. We emphasize again that for this process one must first decide on a bipartition, and then keep it fixed.  Mathematically, LOCC operations can be represented as separable operations  \cite{Plenio:2007zz}
\be\label{separable}
  \rho \mapsto \Lambda (\rho) = \sum_i p_i\, A_i\otimes B_i\, \rho\, A^\dagger_i\otimes B^\dagger_i\,,
\ee
where $A_i$ and $B_i$ are operators acting on the local subsystems $A$ and $B$ respectively. Note however that it is notoriously difficult to characterize the set of operations which can be a achieved through LOCC and that the class of separable operations (\ref{separable}) is strictly larger than LOCC. 
The LOCC operations can be used to define entanglement; indeed entanglement cannot
be created but only decreased by these operations. Moreover, separable states, which are states $\rho$ of the form
\be\label{seprho}
   \rho = \sum_i p_i \rho_i \otimes \sigma_i  \,,
\ee
where $\rho_i,\sigma_i$ are states in the subsystems $A,B$ respectively and $\sum_i p_i =1$ with $p_i\geq 0$, can be prepared from non-entangled pure states $|\psi \rangle_A\otimes |\phi\rangle_B$ by separable operations (\ref{separable}), which can easily be seen using
an alternative representation of (\ref{seprho}) as an ensemble of factorized pure states\footnote{with $|\chi_j\rangle_A = A_j\ket{\psi}_A,\ |\eta_j\rangle_B=B_j \ket{\phi}_B$.}
\be\label{seprhopure}
\rho = \sum_j \tilde{p}_j \,|\chi_j \rangle_A \mbox{}_A\langle \chi_j |\otimes |\eta_j \rangle_B \mbox{}_B\langle \eta_j |\,,
\ee
which follows from ensemble decompositions of $\rho_i,\sigma_j$ and index relabeling.  This leads to the alternative definition of $\rho$ being entangled if and only if it is not separable. Entangled states then act as a resource for LOCC processes. For the simplest case, attempting to convert a pure state $\ket{\Psi}\in \mathcal{H}_{AB}$ to  
another state $\ket{\Phi}\in \mathcal{H}_{AB}$, Nielsen's theorem \cite{nielsen} provides a necessary and sufficient criterion for the possibility of state transition. 
It states that it is possible to convert $\ket{\Psi}$ to $\ket{\Phi}$ by LOCC with reference to a bipartition $AB$ , namely
$\ket{\Psi} \underset{\textrm{\tiny LOCC}}{\longrightarrow} \ket{\Phi}$, 
if and only if the majorization condition 
$\ket{\Phi} \succ \ket{\Psi}$ is fulfilled, 
or equivalently for the corresponding reduced states, $  \rho_A \succ \sigma_A$. 
 Unfortunately, the majorization condition (\ref{majorisation}) is somewhat inconvenient to verify.
The task of first finding all the eigenvalues and then comparing all the partial sums is in general rather laborious (if not intractable). On the other hand, it is easier
to rule out the possibility of the LOCC transition.
One can consider a Schur concave function $g$, which is non-increasing under the
transition. Thus, if we find that $\Delta g\, \equiv g(\rho_A) - g(\sigma_A) > 0$, the transition is ruled out. 

More general LOCC processes, where a pure or mixed state is converted to a mixed state $\rho  \underset{\textrm{\tiny LOCC}}{\longrightarrow} \sigma=\Lambda (\rho)$, where $\rho$ and $\sigma$ are general (pure or mixed) states in the composite system $A\cup B$, have no simple characterization by majorization. Moreover, Schur concave functions are in general not monotonic under such processes. This leads one to consider entanglement monotones. The key requirement for monotonicity is $E(\rho) \geq E(\Lambda (\rho))$. 
In more detail, an entanglement monotone $E(\rho)$ is defined as a map $\rho \mapsto E(\rho)\in \mathbb{R}$ which satisfies  \cite{Plenio:2007zz}
\begin{enumerate}
\item $E(\rho) \geq 0$
\item$ E(\rho )=0$ if $\rho$ is separable
\item $E(\rho)$ does not increase on average\footnote{This allows for the possibility that some $E(K_i\rho K^\dagger_i)$ in the sum may increase.} under LOCC, which means
\be
E(\rho) \geq \sum_i  p_i E\left( \frac{K_i \rho K^\dagger_i}{\Tr (K_i \rho K^\dagger_i) }\right)   \,,
\ee
where $K_i = A_i\otimes B_i$ are Kraus operators of a LOCC process as in (\ref{separable}), and $p_i = \Tr (K_i \rho K^\dagger_i)$.
\end{enumerate}

Central to this work is Vidal's theorem \cite{Vidal}, which provides a way to construct monotones from concave quantities of the type (\ref{def Tr funct}). 
Consider a pure state $|\Psi\rangle$ and $\rho_A=\textrm{Tr}_B |\Psi\rangle\langle\Psi|$, and
any function $g(\rho_A)$ such that
\begin{enumerate}
\item $g$ is concave
\item $g$ is invariant under unitary transformations $U_A$, namely $g\big(U_A\rho_A U_A^\dagger\big)=g(\rho_A)$
\end{enumerate} 
then Vidal's theorem establishes first that $E_{\rm pure}(\ket{\Psi}\langle \Psi |)\equiv g(\rho_A)$ is an entanglement monotone for pure states (a pure state entanglement monotone). Moreover, one can extend $E_{\rm pure}$ to a monotone $E$  for mixed states by using the convex-roof extension, which is defined as follows. Given the density matrix 
$\sigma$, one considers the minimum over all of its ensemble decompositions $\{p_j,|\Psi_j\rangle\}$ realizing $\sigma=\sum_j p_j |\Psi_j\rangle\langle\Psi_j|$ and defines
\be
\label{convexroofextension}
E(\sigma)\,\equiv \!\min_{\{p_j,|\Psi_j\rangle\}} 
\sum_j p_j \,
E_{\rm pure} \big(|\Psi_j\rangle\langle\Psi_j| \big) = \!\min_{\{p_j,|\Psi_j\rangle\}} 
\sum_j p_j \, g\big(\textrm{Tr}_B |\Psi_j\rangle\langle\Psi_j|\big) \ .
\ee 
One can then show that $E$ is an entanglement monotone\footnote{A constant term may be added to $g$ to ensure that $E(\rho)=0$ for separable states.} \cite{Vidal,Plenio:2007zz, 4xHorodecki}. A subtle feature of this construction is that while 
$g$ is concave with respect to states for the $A$-system, $E$ is convex with respect to states on the $AB$ system (as the name "convex roof extension" implies). While our construction of entanglement monotones follows the above steps, in this work we do not need to explcitly use the convex roof extension, since  we consider only LOCC processes between pure states or processes with unital channels, both implying majorization,
where Schur concavity is sufficient to give monotonicity. Since as we mentioned earlier in this section, 
concavity of $g$ also implies that it is
Schur concave, it can therefore directly be used to find necessariy criteria for the existence of either type of process.



We now move to construct an infinite sequence of entanglement monotones generalizing $M$ in (\ref{Mdefinition}) and inequalities generalizing (\ref{Landauer-ineq}) for pairs of majorizing states.


\section{Resource monotones and majorizing state transitions}
\label{sec:resource_monotones}

We have already reviewed the concept of entanglement monotones. A more general concept is that of resource monotones. Quantum resource theories (see \cite{Chitambar_2019} for a review) have been developed as a general framework to sharpen the distiction between the achievable and the unachievable in various classes of quantum processes. One makes the distinction between "free states", which are generated by the class of allowed quantum operations ("free operations"), and "resource states", which cannot be generated by free operations and must therefore be prepared by an external agent. As an example, entanglement cannot be created by LOCC operations (the free operations in the resource theory of entanglement), and it thus acts as resource. Generalizing the concept of an entanglement monotone, one can introduce quantifiers to track the loss of a resource under free operations, resource monotones $R$ that have the following property \cite{Chitambar_2019}
\be\label{resourcemonotone}
R(\rho) \geq R(\Phi (\rho))\,,
\ee
under any free operation $\Phi$ of the resource theory. In this section we will first construct sequences of entanglement monotones, and then show how they can be applied to define more general resource monotones. As an application, we
will briefly consider (the resource theory of) quantum thermodynamics.

\subsection{An infinite sequence of entanglement monotones}
\label{sec:monotones}

It was established in \cite{boes2020} that $M$ defined in (\ref{Mdefinition}) is a pure state entanglement monotone
because it is Schur concave, and thus is monotonic for reduced density matrices under majorization. This was proven as a corollary of a more general theorem involving relative quantities, as we will discuss in Sec.\,\ref{subsec:rel-quantifiers}.
Here we present an alternative simple proof, showing
that $M(\rho)$ is concave. It is straightforward to rewrite (\ref{Mdefinition}) as 
\be
\label{M2 tr F2}
 M (\rho ) = \Tr [f(\rho)] \, ,
\ee 
where 
\be
f (x) = x \left[-\ln x +1\right]^2 \,,
\ee
and it is simple to see that $f(x)$ is a concave function in the unit interval, namely for $x\in [0,1]$. This implies that $M (\rho )$ is concave, {\em i.e.} it satisfies 
\be
M (p\rho_1 + (1-p)\rho_2 ) \geq p M (\rho_1) + (1-p) M (\rho_2)\,,
\ee
for any pair of density matrices $\rho_1,\rho_2$ and for all $p\in [0,1]$. Concavity in turn implies Schur concavity, the property of $M$  proven in \cite{boes2020}. Furthermore, 
by Vidal's theorem \cite{Vidal} concavity implies that $M(\rho )-1$ can be extended by the convex-roof extension to a proper entanglement monotone for all states. 
Since $K=-\ln \rho$ is called modular Hamiltonian \cite{Haagbook}
and we shift it by a constant 1, we call $M$ as the 
{\em second moment of shifted modular Hamiltonian}.  

It is straightforward to find concave generalizations of $M$ involving higher cumulants. 
Up to addition of an overall constant term which has no effect
to concavity, we define
\be
  f_n (x) \,=\, 
  x\,(-\ln x  + b_n)^n 
  =\,
   (-1)^n \, x \,\big(\ln x - b_n\big)^n  \ .
\ee
Since (for $n\geq 2$) $f''(x)= x^{-1}[n(b_n-\ln x)^{n-2}(\ln x + n-1 -b_n)]\leq 0$
for
\be 
b_n \geq n-1 \ ,
\ee
this parameter range ensures that $f_n(x)$ is concave over the unit interval for $n\in \mathbb{N}_+$. We then define a concave quantity  (and by Vidal's theorem, a pure state entanglement monotone)
\be
\label{Mn Tr Fn}
 M^{(n)}(\rho; b_n) \,=\,  \Tr \big[ f_n (\rho)\big]-b_n^n \,.
\ee
We call $M^{(n)}$ as the $n^{\rm th}$ {\em moment of shifted modular Hamiltonian}. The subtraction of the constant $b_n^n$ in (\ref{Mn Tr Fn}) ensures that $M^{(n)}(\rho_{\textrm{\tiny pure}}; b_n)=0$, when $\rho_{\textrm{\tiny pure}}$ describes a pure state.
When $n=2$ and $b_n=1$, the expression (\ref{Mn Tr Fn}) reduces up to an additive constant to (\ref{M2 tr F2}). 
Moreover, since $M^{(n)}$ has the form (\ref{def Tr funct}) with $f_n$ concave in $[0,1]$, 
we can conclude that it is also Schur concave,
from the discussion in Sec.\,\ref{subsec:concavity}.

To rewrite (\ref{Mn Tr Fn}) as combination of cumulants, we expand it first as a linear combination of moments $\mu_k (\rho)$ of modular Hamiltonian
\be
\label{lincombmoments}
   M^{(n)}(\rho;b_n) 
   = \sum^n_{k=1} {n \choose k} b_n^{n-k} \mu_k (\rho) \, ,
   \;\;\;\qquad\;\;\;
   \mu_k=\textrm{Tr}\big[\rho \,(-\ln \rho )^k\big]
   =
   \textrm{Tr}\big( \rho \,K^k\big)\,.
\ee
Then, in turn we can use the relation between moments and cumulants (see e.g. \cite{kardar_2007}) to write
\be\label{momentstocumulants}
  \mu_k (\rho) = \sum'_{\{p_j\}} k! \prod_j \frac{1}{p_j! (j!)^{p_j}} \;C_j (\rho)^{p_j}\, ,
\ee
where the restricted sum $\sum'$ is over all partitions $\{p_j\}$ of $k= \sum_j j\, p_j$, and $C_j(\rho)$ is the $j^{\rm th}$ cumulant of modular Hamiltonian. 
A more streamlined way is to expand the moments $M^{(n)}(\rho ;b_n)$ as cumulants $\tilde{C}_j (\rho)$ of $-\ln \rho + b_n$ as follows
\be\label{momentstocumulants2}
  M^{(n)} (\rho ;b_n) = \sum'_{\{p_j\}} n! \prod_j \frac{1}{p_j! (n!)^{p_j}} \;\tilde{C}_j (\rho)^{p_j}-b_n^n\, ,
\ee
where $n=\sum_j j\, p_j$.  
In the sum over partitions, terms involving cumulants $\tilde{C}_j$ of $-\ln\rho + b_n$
 of order $j\geq 2$ reduce to $C_j (\rho)$ due to translation invariance. Terms involving $j=1$ give $\tilde{C}^{p_1}_1 (\rho)=[S(\rho)+b_n]^{p_1}$. In this way, setting for simplicity $b_n$ to be the smallest possible value $b_n=n-1$ for concavity, we obtain the sequence 
\begin{eqnarray}\label{extrM_variousn}
M^{(1)}(\rho;0) &=& S(\rho)\,,  \\
M^{(2)}(\rho ;1) &=& [S(\rho)+1]^2 + C(\rho )-1\,,  \nonumber \\
M^{(3)}(\rho ;2) &=& [S(\rho)+2]^3  + 3 C (\rho) [S(\rho) +2] + C_3 (\rho)-2^3\,, \nonumber \\
M^{(4)}(\rho ;3) &=& [S(\rho) + 3]^4 + 6 C(\rho )[S(\rho) +3]^2 + 3 C(\rho)^2 + 4 C_3 (\rho)[S(\rho) + 3] + C_4 (\rho) -3^4\,,\nonumber \\
 &\cdots& \,. \nonumber
\end{eqnarray}
The cumulants of modular Hamiltonian can be derived from a generating function as \cite{deBoer:2018mzv, boes2020}
\be
\label{CumulantGenerating}
C_n (\rho) =  (-1)^n \frac{d^n}{d\alpha^n} \big[(1-\alpha )\, S^{(\alpha )} (\rho)\big]\Big|_{\alpha=1} \,,
\ee
where $S^{(\alpha )} (\rho)$  are the R\'enyi entropies (\ref{defRenyi}). 
The R\'enyi entropies in turn are determined by the entanglement
spectrum. Note that the R\'enyi entropies themselves are not concave when the  R\'enyi index $\alpha>1$. On the other hand, the full information about the bipartite entanglement is encapsulated by the entanglement spectrum, 
which provides the R\'enyi entropies. 
They in turn can be converted to the cumulants, which can be converted to the 
entanglement monotones $M^{(n)}$. To summarize, the above sequence provides a way to convert the full entanglement spectrum into an infinite sequence of entanglement monotones.
It is helpful to note that the cumulants of modular Hamiltonian are additive, namely
\be
\label{additivity_Cn}
C_n (\rho_1 \otimes \rho_2 ) \,=\,  C_n (\rho_1 ) + C_n (\rho_2 )\, .
\ee
This can be verified in different ways, most simply it follows from the additivity of the R\'enyi entropies and the generating function formula (\ref{CumulantGenerating}) for cumulants. 
Also, for the maximally mixed state $\rho = \mathbb{1}/d$ of a system with a $d$-dimensional Hilbert space, 
where $\mathbb{1}$ is the $d\times d$ identity matrix,
the same formula gives a simple proof of 
\be
 C_n (\mathbb{1}/d) = \Bigg\{ 
 \begin{array}{ll} 
 \; \ln d \hspace{.6cm}& {\rm for}\ n= 1\,, 
 \\ 
 \rule{0pt}{.5cm}
\; 0 & {\rm for}\  n\geq 2  \,.
 \end{array} 
\ee
Notice that $C_n=\mu_n=0$ for pure states. Also, for maximally mixed state, we have 
\be
M^{(n)}(\mathbb{1}/d; b_n) = (\ln d + b_n)^n-b^n_n \ ,
\ee
so that one would need to rescale by an overall normalization constant if one wishes to follow the convention \cite{Plenio:2007zz} that an entanglement monotone is normalized to $\ln d$ for the maximally mixed state ({\em i.e.} $M^{(n)}(\mathbb{1}/d;b_n)= \ln d$). 

In fact, the sequence $M^{(n)}$ can be derived more straightforwardly from the R\'enyi entropies, converting the latter to a generating function as follows. 
Since
\be
\Tr ( \rho^\alpha ) = \exp [(1-\alpha )S^{(\alpha )}(\rho )]\, ,
\ee
in terms of the R\'enyi entropies $S^{(\alpha )}(\rho)$, by defining
\be
k_{\alpha}(\rho; b) 
\,\equiv\, 
e^{-\alpha b} \,\Tr (\rho^\alpha) 
\,=\, 
\Tr \big[e^{\alpha (\ln \rho - b)} \big]
= \exp \big[\! -\alpha b + (1-\alpha)S^{(\alpha )} (\rho)\big] \,,
\ee
we have that
\be
\label{Mn from Sn}
 M^{(n)}(\rho ;b_n ) = \left[e^{b}(-1)^n  \frac{d^n}{d\alpha^n} \;k_\alpha (\rho ;b)\right]\bigg|_{\alpha =1, b=b_n}-\,b_n^n \, ,
\ee
which provides a prescription for converting the R\'enyi entropies to the  infinite sequence of entanglement monotones.

To summarize: the complete information of the bipartite entanglement is encapsulated by the entanglement spectrum and this information can be repackaged first to the set of R\'enyi entropies, 
which in turn can be converted to an infinite tower of entanglement monotones $ M^{(n)}(\rho;b_n)$. 

Note that concavity only gives a lower bound $b_n\geq n-1$. Other conditions
may lead to a particular choice for the value of the constant $b_n$.

In Appendix \ref{subsec:highercumulants} we report additional comments on how to construct entanglement monotones exploiting the cumulants of the modular Hamiltonian.

\subsection{Extremal polynomial monotones and inequalities for state transitions}
\label{subsec:ExtremalMonotones}

In this section we perform a more general analysis of infinite sequence of monotones constructed from convex polynomials of the moments of $\ln \rho$. With some abuse of terminology, we will refer to them as "polynomial entanglement monotones" for brevity. 
We find that such monotones form an infinite dimensional cone, which
is determined by extremal rays, defining what we will call for brevity {\em extremal polynomial monotones}. The extremal polynomial monotones give rise to an infinite sequence of inequalities that
must be satisfied in majorizing state transformations. 

Our starting point is the general functional
\be
\label{general polynomial monotone}
P(\rho) = {\rm Tr}\big[\rho\, F(\ln\rho)\big] \,.
\ee
The corresponding scalar function $f(x)=x\,F(\ln x)$ is convex if $f''\geq 0$ for $x\in [0,1]$. 
For convenience, we focus on convex measures, which can be converted to concave measures by a minus sign.  
This translates into the condition
\be\label{possemidef}
F'(y) + F''(y)\geq 0\,, \qquad y\leq 0\,.
\ee
For example, $f(y)=y$ clearly meets this criterion, with $P$ in (\ref{general polynomial monotone}) being minus the von Neumann entropy. Convex functions yield convex measures
which are monotonic under majorization (Schur convex)
\be
\rho \succ \sigma 
\qquad
\Longrightarrow
\qquad
P(\rho) \geq P(\sigma) \ .
\ee


In the previous subsections, we have seen that for suitable functions $F$ we get monotones which we use to test for majorization. 
We now want to be more systematic and classify all $F$ with the property that $F'+F''\geq 0$ for $y\leq 0$. We restrict our survey here by focusing on measures
where $F$ is polynomial in $y=\ln \rho$. Let us introduce 
\be
G(y)\equiv F'(y)+F''(y)\,.
\ee
From (\ref{possemidef}), consider all polynomials $G(y)$ such that $G(y)\geq 0$ for $y\leq 0$. For each 
such polynomial $G$, there is a unique polynomial $F$ such that $F''+F'=G$ (up to vanishing constant terms, which can be added at will {\em e.g.} changing the value of $F$ and $P$ for pure states). 

The space of positive semidefinite polynomials $G(y)$ on negative real axis 
is a convex cone  in the sense that, given a
set of functions $G_i$ with this property, a linear combination $\sum_i \alpha_i G_i$ with non-negative coefficients will also
have this property\footnote{We remark that there exists a different line of investigation that uses convex cones, to classify and constrain entropy inequalities in holographic
gauge-gravity duality, initiated in \cite{Bao:2015bfa}.}. Cones are completely determined by specifying all "extremal" rays, which are functions $G$ which
do not admit a non-trivial decomposition of the type $G=\sum_i \alpha_i G_i$. The most general $G$ will then be a linear
combination of extremal functions $G$ with non-negative coefficients. In general, there can be finitely or infinitely 
many extremal $G$. 

From the perspective of monotones, the extremal $G_j$ will provide a complete list of non-trivial "extremal polynomial monotones", with all
other polynomial monotones being linear combinations of extremal polynomial monotones with non-negative coefficients. It is therefore 
interesting to classify all such extremal monotones. For this we need to classify all extremal $G_j$. We can use known results from the 
theory of positive semidefinite polynomials, which can be summarized as the following theorem.

\begin{thm} All positive semidefinite polynomials $G(y)$ on the negative half-line $y\in (-\infty ,0]$ have the following form. 
For polynomials $G(y)$ of degree $2d$ (with $d\geq 1$), they are linear combinations with non-negative coefficients of polynomials of the form 
$G_{\vec{a}}(y)= \prod^d_{i=1} (y+a_i)^2$ with all $a_i\geq 0$. For polynomials of degree $2d+1$ they are linear combinations with non-negative coefficients of  polynomials of the form
$G_{\vec{a}}(y)=-\,y\,\prod^d_{i=1} (y+a_i)^2$ with again all $a_i\geq 0$.
\label{thm_Pol}
\end{thm}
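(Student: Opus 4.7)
The plan is to deduce the characterisation from the fundamental theorem of algebra combined with a sign analysis on the half-line. I would start by factorising over $\mathbb{R}$ as
\be
G(y) = c \prod_j (y - r_j)^{m_j} \prod_k \big[ (y - \alpha_k)^2 + \beta_k^2 \big],
\ee
where the $r_j$ are the real roots, the $(\alpha_k, \beta_k)$ encode the complex conjugate pairs, and $c$ is the real leading coefficient. This factorisation is canonical and will serve as the skeleton of the argument.

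Next I would read off the necessary conditions imposed by $G(y) \geq 0$ for all $y \leq 0$. The behaviour as $y \to -\infty$ forces $c > 0$ for degree $2d$ and $c < 0$ for degree $2d+1$. Every real root $r_j$ lying strictly inside $(-\infty, 0)$ must have even multiplicity, since otherwise $G$ would change sign across $r_j$. The irreducible quadratic factors are strictly positive on all of $\mathbb{R}$, so they impose no further sign constraints. A parity count of the total real-root multiplicity then gives, in the even-degree case, an entirely paired collection of real roots on $(-\infty,0]$, and in the odd-degree case one unpaired real factor which — to preserve non-negativity — must be the factor $-y$, leaving the remaining roots again paired.

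The natural next step is to group the even-multiplicity real roots $r_j = -a_i \leq 0$ into squared factors $(y + a_i)^2$ with $a_i \geq 0$, yielding the claimed generators $\prod_{i=1}^{d}(y + a_i)^2$ (resp.\ $-y\prod_{i=1}^{d}(y + a_i)^2$) in the pure case where $G$ has no complex roots and no real roots in $(0,\infty)$. The final move is the density/cone statement: to show that every admissible $G$ is a non-negative combination of such pure generators, I would pass to the substitution $y = -t$, apply the P\'olya--Szeg\H o representation of polynomials non-negative on $[0,\infty)$ in the form $A(t)^2 + t\, B(t)^2$, and then expand the squares $A^2, B^2$ back in terms of the generators after undoing the substitution.

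The main obstacle — and what I would expect to occupy most of the work — is precisely the last step, namely handling the two classes of factors that are not immediately of the claimed shape: the irreducible quadratic factors (complex conjugate roots) and the real roots lying in $(0, \infty)$. Showing that each such factor can be written as a \emph{non-negative} linear combination of the generators $\prod (y+a_i)^2$ with $a_i \geq 0$ is a delicate cone-theoretic statement rather than a routine factorisation, and it is at this point where the specific positivity of the coefficients must be extracted carefully from the P\'olya--Szeg\H o decomposition (or a limiting approximation argument). The even/odd dichotomy in the theorem statement then follows automatically from the parity bookkeeping established in the first two steps.
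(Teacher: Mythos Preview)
Your proposal is correct and follows essentially the same route as the paper: factorise over $\mathbb{R}$, invoke the P\'olya--Szeg\H{o} representation $p = q^2 + r^2 - y(s^2 + t^2)$ for polynomials non-negative on the half-line (the paper derives this via a short ring argument rather than citing it), and then expand each squared factor into the extremal generators. The paper carries out your final ``main obstacle'' step by writing each square as $a_-(y)^2 a_+(y)^2$ with $a_\pm$ collecting roots on the respective half-lines and then expanding $a_+(y)^2$ as a polynomial in $(-y)$ with non-negative coefficients, which is exactly the mechanism you would need to make your last paragraph precise.
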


We defer the detailed proof to Appendix \ref{app:Proof}. 
The result is essentially known in mathematics (see \cite{PowRez} for a review of non-negative polynomials).

We emphasize that the higher moments of modular Hamiltonian $M^{(n)}$, introduced in Sec.\,\ref{sec:monotones}, are in general not extremal monotones. Consider $F(x)=(x-b_k)^k$ so that $G(x)=F'+F''= k(x-b_k)^{k-2}(x-b_k+k-1)$. This has an isolated zero at $x=b_k-k+1$. Since $b_k \geq k-1$, this isolated zero cannot be on the negative real axis. In the limiting case $b_k=k-1$ one obtains $G=kx(x-k+1)^{k-2}$. This has many zeroes on the positive real line rather than
even degeneracy zeroes on the negative real line, hence as such it is not extremal. Thus the moments $M^{(n)}(\rho ;b_n)$ in general are linear combinations of the extremal monotones $P^{(k)}_E (\rho)$. 

Let us now study the lowest degree examples in detail. For $F$ of degree 1, $G$ has degree zero and must be a non-negative constant, which we can take to be $1$. 
Then $F(y)=y$ and the resulting extremal monotone $P^{(1)}_E$ is minus the entropy, namely $P^{(1)}_E=-S(\rho)=-M^{(1)}$. 

For $F$ of degree 2, $G$ is of degree one. According to the Theorem \ref{thm_Pol}, there is a unique extremal $G$ which is $-y$. Solving $F''+F'=-y$, we have $F=y-y^2/2$. We might as well take twice this as extremal functions are defined up to overall
normalization only. In that case $F=2y-y^2$. Thus, we have proven that 
\be
P^{(2)}_E(\rho) ={\rm Tr} \big[\rho (2\ln \rho -\ln^2 \rho)\big]
=-C(\rho)-S(\rho)^2 - 2S(\rho) = -M^{(2)}(\rho, b_2=1) 
\ee
is an extremal monotone. Note that, up to second order, the two classes of monotones are related by $M^{(n)}(\rho;n-1) = -P_E^{(n)}(\rho)$, while this is no longer
true for $n\geq 3$.

Instead of the entropy production inequality with the finite correction (\ref{Landauer-ineq}), the extremal monotone appears to give a slightly sharper inequality. 
The statement 
\be
\label{j1}
\rho \succ \sigma 
\quad
\Longrightarrow
\quad
P^{(2)}_E(\rho) = -C(\rho)-S(\rho)^2 - 2S(\rho) \geq -C(\sigma)-S(\sigma)^2 - 2S(\sigma) = P^{(2)}_E (\sigma ) 
\ee
can be rewritten as the inequality
\be \label{j2}
S(\sigma) - S(\rho) \geq \frac{ C(\rho) - C(\sigma) }{ S(\rho) + S(\sigma) + 2 }\,,
\ee
which appears to be slightly sharper than (\ref{Landauer-ineq}) involving
$M$, $S$ and $C$. 

Until now, the inequalities have been the same as the ones coming from 
\be\label{Ineqsimple}
M^{(n)}(\sigma ;n-1)\geq M^{(n)}(\rho ;n-1) \,,
\ee
when $\rho\succ\sigma $. 
In terms of the cumulants, using (\ref{momentstocumulants2}), this sequence has the explicit form
\begin{eqnarray}\label{simplesequence}
 && S(\sigma)\geq S(\rho)\,, \\
 && [S(\sigma )+1]^2 + C(\sigma ) \geq [S(\rho)+1]^2 + C(\rho )\,, \nonumber  \\
 &&[S(\sigma )+2]^3  + 3 C (\sigma ) [S(\sigma ) +2] + C_3 (\sigma ) \geq [S(\rho)+2]^3  + 3 C (\rho) [S(\rho) +2] + C_3 (\rho)\,, \nonumber \\
 &&\cdots   \ . \nonumber
\end{eqnarray}
Notice that the "second law" of entropy (claiming that the entropy is non-decreasing in transitions $\rho \mapsto \sigma$ with $\rho \succ \sigma$) becomes refined into an infinite sequence of
inequalities that must likewise be satisfied.
However, at orders $n\geq 3$ the extremal polynomial monotones may give tighter inequalities. 

For $F$ of degree 3, $G$ is of degree two and must be of the form $(y+a)^2$ with $a>0$. So in this case we get a one-parameter family of
extremal monotones. Solving $F'+F''=G$ yields 
\be
F= \frac{1}{3} y^3 + (a-1) y^2 +(a^2-2a+2) y\,,
\ee
and this gives rise to a one-parameter family of extremal monotones for $a\geq 0$
\be
P^{(3)}_E(\rho) \,=\, 
{\rm Tr}\!\left[\rho \left( \frac{1}{3} \ln^3\rho + (a-1) \ln^2 \rho + (a^2 - 2 a + 2) \ln\rho\right)\right]\,,
\ee
and correspondingly to an infinite number of inequivalent inequalities. It is useful to express the coefficients of $a^k$ in terms of the monotones $M_n\equiv M^{(n)}(\rho; n-1)$. Let us first denote $r^{(k)}\equiv{\rm Tr}[\rho \left(-\ln \rho\right)]$ and $r\equiv r^{(1)}$. 
We have (including 4th order for future reference)
\bea
&& M_1 = \Tr [\rho (-\ln \rho)] = r\,, \\
&& M_2 = \Tr \left[\rho (-\ln \rho +1)^2\right] -1= r^{(2)} +2\,r\,,  \\
&& M_3 = \Tr \left[\rho (-\ln \rho +2)^3\right]-2^3 = r^{(3)} + 6\,r^{(2)} +12 \,r\,, \\
\label{rtom4}
&& M_4 = \Tr \left[\rho (-\ln \rho +3)^4\right]-3^4 = r^{(4)} + 12\, r^{(3)} + 54\,r^{(2)} + 108 \,r \, .
\eea
We can then express
\bea
P^{(3)}_E(\rho) &=& -a^2 r + a (r^{(2)} +2r) - \frac{1}{3}r^{(3)} -r^{(2)} -2r \nonumber \\
    \mbox{}&=& -a^2 M_1 +a M_2 -\frac{1}{3} M_3 +M_2  \,.
\eea
We could thus calculate $P^{(3)}_E$ from the R\'enyi entropies by using first the generating function formula (\ref{Mn from Sn}) for $M^{(n)}$.

We explore the cubic case a bit more, to find a tight inequality. Assuming $\rho \succ \sigma$, we obtain 
\be\label{p3inequality}
P^{(3)}_E(\rho)\geq P^{(3)}(\sigma) \ ,
\ee
which at this stage is an infinite family of inequalities due to the free parameter $a$.
The inequality (\ref{p3inequality}) can be written more explicitly as
\be \label{j11}
 w_2 a^2 + w_1 a + w_0 \equiv a^2\Delta M_1  + -a\Delta M_2+\frac{1}{3}\Delta M_3-\Delta M_2 \geq 0\,,
\ee
where
\be
\label{DeltaM n definition}
\Delta M_n = M^{(n)}(\sigma;n-1)-M^{(n)}(\rho; n-1)\geq 0 \ .
\ee
Notice that $w_2\geq 0$ while $w_1\leq 0$. 
The quadratic function of $a$ will therefore have a minimum at 
\be\label{a0}
a_0=-\frac{w_1}{2w_2}\geq 0 \ . 
\ee
In order to check whether
the inequalities (\ref{j11}) are satisfied for all $a\geq 0$, we only need to verify it for  $a_0$ for which the
quadratic polynomial in (\ref{j11}) takes its mimimum value. This leads to 
\be\label{w-inequality1} 
w_0 - \frac{w_1^2}{4 w_2} \geq 0\ .
\ee
Finally, substituting the $w_n$ from (\ref{j11}), the inequality $(\ref{w-inequality1})$ 
takes the compact form
\be\label{Ineq3compact}
\Delta M_3 \geq 3 \Delta M_2 + \frac{3}{4} \frac{(\Delta M_2)^2}{\Delta M_1} \ .
\ee
Now we can explicitly see the advantage of (\ref{Ineq3compact}) over the simple inequality $\Delta M_3\geq 0$ from (\ref{Ineqsimple}). We could then
substitute the explicit forms of $\Delta M_n$ from (\ref{DeltaM n definition}) to explore how entropy production and changes in other cumulants up to third order are bounded by each other. 
Alternatively, we can study differences in moments $r^{(n)}$ and $s^{(n)}\equiv \Tr [\sigma (-\ln \sigma )^n]$. For example, the difference $r^{(3)}-s^{(3)}$ has a lower bound in terms of $r,r^{(2)},s$ and $s^{(2)}$
\be\label{M3inequality}
r^{(3)}-s^{(3)} \leq -3(r-s) +\frac{3}{4}\frac{\left(r^{(2)}-s^{(2)}\right)^2}{r-s}\, .
\ee
 Note that (\ref{Ineq3compact}) has
an interesting hierarchy where the third order $\Delta M_3$ is bounded by  are bounded by combinations $\Delta M_k$ with $k=1,2$, or the next order inequality is bounded by
the previous order inequalities. 
This suggests that higher order inequalities could have some interesting recursive structure. We explore this some more by working out the
fourth order monotone $P^{(4)}_E$ and the resulting inequalities.

To find $P^{(4)}(\rho)$, we first need to solve $F''(y)+F'(y)=G(y)$ with
\be
  G(y) =-y(y+a)^2 \,,
\ee
where $y=\ln \rho$ and $a>0$. The polynomial solution is
\be
F(y)=a^2 \left(y-\frac{y^2}{2}\right)+a\left(-4y+2y^2 -\frac{2y^3}{3}\right)+6y-3y^2+y^3-\frac{y^4}{4} \ ,
\ee
collecting coefficients of $a^k$. Then, $P^{(4)}_E (\rho)=\Tr [\rho F(\rho)]$, whose explicit form reads
\be
P^{(4)}_E=a^2 \left(-r-\frac{r^{(2)}}{2}\right)+a\left(4r+2r^{(2)} +\frac{2r^{(3)}}{3}\right)-6r-3r^{(2)}-r^{(3)}-\frac{r^{(4)}}{4} \ .
\ee
The coefficients of $a^k$ can be expressed in terms of the monotones $M_n\equiv M^{(n)}(\rho; n-1)$  and using (\ref{rtom4}). 
With some calculation, we find
\bea
&&-\,r-\frac{r^2}{2} = -\half M_2\,, \\
&& 4r+2r^{(2)} +\frac{2r^{(3)}}{3} = \frac{2}{3}\left(M_3-3M_2\right)\,, \\
&&-\,6r-3r^{(2)}-r^{(3)}-\frac{r^{(4)}}{4} = -\frac{1}{4}(M_4-8M_3 +6M_2) \, ,
\eea
so that 
\be
P^{(4)}_E (\rho) = \left(-\half M_2\right)a^2 + \left[\frac{2}{3}(M_3-3M_2)\right]a  -\frac{1}{4}(M_4-8M_3 +6M_2) \ .
\ee
Now consider a pair of majorizing states. As before, we have
\be\label{ineq}
\rho \succ \sigma\qquad \Rightarrow \qquad P^{(4)}(\rho)\geq P^{(4)}(\sigma) \ . 
\ee
Let us denote $\Mtilde_n \equiv M^{(n)} (\sigma ;n-1)$ and $\Delta M_n = M^{(n)}(\sigma;n-1)-M^{(n)}(\rho; n-1)\geq 0$, as in (\ref{DeltaM n definition}). As before, we rewrite the inequality (\ref{ineq}) in the form
\be\label{aineq}
w_2 a^2 + w_1 a + w_0 \geq 0  \,,
\ee
where now
\bea\label{ws}
&&w_2 \equiv\half \left(\Mtilde_2-M_2\right) \equiv \half \Delta M_2 \geq 0 \,, \\
&& w_1 \equiv -\frac{2}{3}\left(\Delta M_3-3\Delta M_2\right) \leq 0\,, \\
&& w_0 \equiv \frac{1}{4} \left(\Delta M_4 -8\Delta M_3 + 6\Delta M_2\right) \, .
\eea
Note that in the above $w_1 \leq 0$ since $\Delta M_3 \geq 3\Delta M_2$ by the inequality (\ref{Ineq3compact}). We then find the minimum of (\ref{aineq}) at $a_0$ as in (\ref{a0})
and obtain the same $w$-inequality (\ref{w-inequality1}) as before. Substituting the $w_k$ from (\ref{ws}) by $\Delta M_n$
gives the final form of the $P^{(4)}_E$ inequality:
\be\label{P4ineq}
\Delta M_4 \geq 8 \Delta M_3 -6\Delta M_2 + \frac{8}{9}\frac{(\Delta M_3-3\Delta M_2)^2}{\Delta M_2} \ .
\ee
The right hand side is positive since
\be
8 \Delta M_3 - 6\Delta M_2 = 6 \Delta M_3 + 2 (\Delta M_3 -3\Delta M_2) \geq 0\,,
\ee
from (\ref{Ineq3compact}). We could use (\ref{Ineq3compact}) twice in the right hand side of (\ref{P4ineq}) to relax the lower bound a bit to the inequality  
\be
\Delta M_4 \geq 6\Delta M_3 + \frac{3}{2}\frac{(\Delta M_2)^2}{\Delta M_1} + \half \frac{(\Delta M_2)^3}{\Delta M_1} \ .
\ee
The upshot is that the 4th order inequality is manifestly tighter than just $\Delta M_4\geq 0$. There is still some interesting recursive structure, albeit the terms appearing in the right hand side of (\ref{P4ineq}) do not appear in the same combinations as in the previous order inequalities:
both $\Delta M_3$ and $\Delta M_3 - 3\Delta M_2$ appear instead of only $\Delta M_3 - 3\Delta M_2 -\frac{3}{4}\frac{(\Delta M_2)^2}{\Delta M_1}$.

 The virtue of $M^{(n)}(\rho ;b_n)$ is that
they are simple to compute (for example when the R\'enyi entropies are known) and provide inequalities involving cumulants up to order $n$. The inequalities $\Delta M^{(n)}\geq 0$ are expected to be weaker than those derived from the extremal polynomial monotones
$P^{(n)}_E(\rho)$. A trade-off is that the latter inequalities are less straightforward to derive, due to the increasingly many free parameters contained in $P^{(n)}_E$, which need to be optimized to make the inequalities as tight as possible.

\subsection{Other resource monotones from pure state entanglement monotones}
\label{subsec:rel-quantifiers}

In Ref.\,\cite{boes2020} a more general version of majorization has also been considered, which has applications to other resource theories than that of entanglement, {\em e.g.} to quantum thermodynamics. Let ${\cal D}$ denote the set of quantum states of 
a given system. Consider two pairs of states $\rho ,\sigma \in {\cal D}$ and $\rho' ,\sigma' \in {\cal D}$. If there exists a quantum channel $\Ecal$ in $\cal{D}$ such that $\Ecal (\rho)=\rho' $ and $\Ecal (\sigma)=\sigma'$, we denote $(\rho ,\sigma) \succ (\rho',\sigma')$, defining a partial order between pairs of states. A special case is the fixed point $\sigma = \sigma'\equiv \sigma_*$ of the free operations, we then write $\rho \succ_{\sigma_*} \rho'$ instead of $(\rho ,\sigma_*)\succ (\rho' ,\sigma_*)$ and say
that $\rho$ $\sigma_*${\em -majorizes} $\rho'$. An important example is the {\em thermomajorization}, where the fixed point is the Gaussian thermal state, $\sigma_* = e^{-\beta H}/Z$ (or a generalized
Gaussian state). This leads to a partial order in quantum thermodynamics.  For the generalized majorization $(\rho ,\sigma) \succ (\rho',\sigma')$, the relative entropy $S(\rho ||\sigma)$ is monotonic with 
\be\label{Smonotonic}
S(\rho || \sigma) \geq S(\rho'||\sigma')\,,
\ee
 and is called a resource monotone in the above context. The monotonicity (\ref{Smonotonic}) is a special case of the more general contractivity property $S(\rho ||\sigma)\geq S({\cal N}(\rho )||{\cal N}(\sigma ))$ for any quantum channel ${\cal N}$.

In Ref.\,\cite{boes2020} a new relative quantifier was introduced which takes the form
\be\label{relM2}
M_x(\rho ||\sigma ) = C (\rho || \sigma) + \big( 1 -\ln (x) - S(\rho || \sigma )\big)^2,
\ee
where $\sigma $ is a full rank state, $C(\rho || \sigma)$ is the variance of  the relative modular Hamiltonian\footnote{Note that \cite{boes2020} defines $\tilde M_x(\rho ||\sigma ) = \tilde C (\rho || \sigma) + \left( \frac{1}{\ln 2} -\log_2 (x) -\tilde  S(\rho || \sigma )\right)^2$, where
we use the tilde notation to emphasize that the quantities are based on binary logarithms, they involve rescalings by $\ln 2$, so that $M_x(\rho || \sigma)=(\ln 2)^2 \tilde M_x(\rho||\sigma)$, to compare with (\ref{relM2}).},
\be
C(\rho ||\sigma) = \Tr [\rho (\ln\rho -\ln \sigma )^2] - S(\rho || \sigma)^2\,,
\ee
where $S(\rho || \sigma)\equiv \mathrm{Tr} [\rho (\ln \rho -\ln \sigma )]$ is the relative entropy, which is the expectation value of the relative modular Hamiltonian. They considered a setting of "pairs of classical states", that is assuming that $[\rho ,\sigma]=[\rho',\sigma']=0$, further assuming that $\sigma ,\sigma'$ are both full rank. In this setting, they proved some interesting properties for the quantifier (\ref{relM2}), and proved a lower bound  for the production of relative entropy, involving variations of the relative variance.  
In this section we generalize this construction of \cite{boes2020} to an infinite class of relative quantifiers, which are also limited to pairs of classical states, i.e. pairs of mutually commuting states. A more general analysis in the non-commuting "quantum" case remains an important open problem.

Consider first quantifiers of the form (\ref{general polynomial monotone}) 
\be\label{Erho}
E(\rho) = \Tr [\rho F(\ln \rho)]\,,
\ee
where we change the notation with respect to (\ref{general polynomial monotone}), to emphasize that now we allow $F(x)$ to be any smooth function (not just a polynomial), such that $xF(x)$ is concave in the unit interval $x\in [0,1]$. 
By Vidal's theorem, (\ref{Erho})
defines a pure state entanglement monotone in a bipartite system $A \cup B$, when $\rho$ is the reduced state of a pure state $|\psi\rangle_{A\cup B}$. Then, for a pair of commuting full rank density matrices $\rho ,\sigma$, we define a relative quantifier
\be\label{Exrel}
E_x(\rho || \sigma) =  E(x\rho\sigma^{-1} ) = \Tr [\rho F(\ln (x\rho\sigma^{-1}))] = \Tr [\rho F(\ln \rho -\ln \sigma +\ln (x))]\,,
\ee  
where $x$ is a real number. We then prove the following  theorem, which generalizes Theorem 12 of \cite{boes2020}:
\begin{thm}
Let $(\rho ,\sigma)$ and $(\rho',\sigma')$ be two pairs of commuting states, that is $[\rho,\sigma]=[\rho',\sigma']=0$, and $\sigma ,\sigma'$  both full rank. If $(\rho ,\sigma)\succ (\rho' ,\sigma')$, namely $\rho \succ \rho'$ and $\sigma\succ \sigma'$, then 
\be\label{Eineq}
 E_{s_{\rm min}}(\rho'||\sigma') \geq E_{s_{\rm min}} (\rho || \sigma )\,,
\ee
 where $s_{\rm min}$ denotes the smallest eigenvalue of $\sigma$.
\label{relativethm}
\end{thm}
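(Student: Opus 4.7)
The strategy is to mirror the proof of Theorem 12 of \cite{boes2020}, isolating the part that depended only on concavity of the building block $f(y)\equiv yF(\ln y)$ rather than on the specific quadratic form of $M_x$. Since the states within each pair commute, I would first diagonalize $(\rho,\sigma)$ in a common eigenbasis, producing eigenvalue pairs $(r_i,s_i)$, and likewise $(\rho',\sigma')$ with pairs $(r'_j,s'_j)$. The relation $(\rho,\sigma)\succ(\rho',\sigma')$ then reduces, via the classical Blackwell/relative-majorization equivalence, to the existence of a column-stochastic matrix $T=(T_{ji})$ with $\sum_j T_{ji}=1$ that intertwines both distributions: $r'_j=\sum_i T_{ji}r_i$ and $s'_j=\sum_i T_{ji}s_i$.

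Next, I would rewrite
\be
E_x(\rho||\sigma)=\sum_i r_i F(\ln(xr_i/s_i))=\tfrac{1}{x}\sum_i s_i f(xr_i/s_i),
\ee
with an analogous expression for $(\rho',\sigma')$. By assumption $f(y)=yF(\ln y)$ is concave on $[0,1]$ (the same condition that, via Vidal's theorem, made $E$ a pure-state entanglement monotone in Section \ref{sec:monotones}). The choice $x=s_{\min}$ is engineered precisely so that every argument $xr_i/s_i=s_{\min}r_i/s_i\leq r_i\leq 1$ lies in $[0,1]$; and then the corresponding primed argument $xr'_j/s'_j=\sum_i(T_{ji}s_i/s'_j)(xr_i/s_i)$ is a convex combination of values in $[0,1]$ (the weights $T_{ji}s_i/s'_j$ sum to one by definition of $s'_j$), hence also in $[0,1]$. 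So $f$ is evaluated strictly inside its concavity domain on both sides of the desired inequality.

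The proof then closes in one step by Jensen's inequality applied to the convex combination above:
\be
s'_j\,f(xr'_j/s'_j)\;\geq\;\sum_i T_{ji}\,s_i\,f(xr_i/s_i).
\ee
Summing over $j$ and using $\sum_j T_{ji}=1$ gives $\sum_j s'_j f(xr'_j/s'_j)\geq\sum_i s_i f(xr_i/s_i)$, and dividing by $x=s_{\min}$ is exactly (\ref{Eineq}).

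The only genuinely nontrivial ingredient is the classical reduction in the first paragraph, namely that the existence of a quantum channel intertwining the two commuting pairs is captured by a stochastic matrix between the joint spectra. Because the states within each pair commute, the channel restricts (after a harmless grouping of degeneracies) to a Markov kernel on the common probability simplex, which supplies the required $T$; this is the statement used implicitly in \cite{boes2020} and is the main technical point to cite carefully. Everything after that is a single Jensen step whose only subtlety is the domain check secured by the choice $x=s_{\min}$, which also explains why the theorem singles out the smallest eigenvalue of $\sigma$ rather than an arbitrary $x$.
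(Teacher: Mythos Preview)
Your proposal is correct and follows essentially the same route as the paper: diagonalize the commuting pairs, invoke the classical reduction to a stochastic matrix (the paper cites Lemma~20 of \cite{boes2020} for this), and then use concavity of $f(y)=yF(\ln y)$ on $[0,1]$ with the choice $x=s_{\rm min}$ to guarantee the arguments stay in the concavity domain. The only difference is cosmetic: where the paper outsources the final step to Lemma~34 of \cite{boes2020}, you spell out the Jensen inequality explicitly, which is exactly what that lemma encodes.
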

\begin{proof}
The proof is a simple modification of the proof reported in Appendix G of \cite{boes2020}, so we will only present the essential steps here, referring to \cite{boes2020} for details.

 First, since $[\rho , \sigma]=0$ we can diagonalize both states in the same eigenbasis and write $\rho = \sum_i r_i |i\rangle\langle i|$ and $\sigma = \sum_i s_i |i\rangle\langle i|$.  Then we write
\bea
 E_{s_{\rm min}} (\rho || \sigma ) &=& \sum_i r_i F\left(\ln \left(s_{\rm min}\frac{r_i}{s_i}\right)\right) =  \frac{1}{s_{\rm min}} \sum_i s_i \left(  s_{\rm min}\frac{r_i}{s_i } \right)  F\left(\ln\left(s_{\rm min} \frac{r_i}{s_i}\right)\right) \nonumber \\
          &\equiv & \sum_i s_i g_{s_{\rm min}}\left( \frac{r_i}{s_i}\right) \ ,
\eea
where
\be
   g_{s_{\rm min}}\left( \frac{r_i}{s_i}\right) \equiv \frac{1}{s_{\rm min}}\left(  s_{\rm min}\frac{r_i}{s_i } \right)  F\left(\ln\left(s_{\rm min} \frac{r_i}{s_i}\right)\right) 
\ee
is a concave function in the interval $[\min_i \frac{r_i}{s_i}, \max_i \frac{r_i}{s_i}]$ since $xF(x)$ is concave in the unit interval, $s_{\rm min}>0$ since $\sigma$ is full rank and $s_{\rm min} \frac{r_i}{s_i} \in [0,1]$. On the other hand,
since $(\rho ,\sigma)\succ (\rho' ,\sigma')$, there exists a quantum channel mapping $(\rho,\sigma)$ to $(\rho',\sigma')$, thus by Lemma 20 of \cite{boes2020} there exists a right stochastic matrix $T$ that maps the eigenvalue vectors (denoted with bold symbols) as $\boldsymbol{r}\,T=\boldsymbol{r}'$,
$\boldsymbol{s}\,T=\boldsymbol{s}'$.
Then, by Lemma 34 of \cite{boes2020}, it follows that the inequality (\ref{Eineq}) holds. Note also that from the
majorization $\sigma \succ \sigma'$, it follows that the respective smallest eigenvalues satisfy $s_{\rm min} \leq s'_{\rm min}$.
\end{proof}

For an operation $\Phi$ with a fixed state $\sigma_*$ and $\rho \succ_{\sigma_*} \Phi(\rho)$, the inequality (\ref{Eineq}) implies that
\be
R_*(\rho) \equiv -E(\rho ||\sigma_*) \geq -E(\Phi(\rho)||\sigma_*) \equiv R_*(\Phi (\rho))\ , 
\ee
thus we make contact with
the definition of a resource monotone (\ref{resourcemonotone}).  
As an application of the construction (\ref{Exrel}), we can use our monotones $M^{(n)}$ and $P_E^{(n)}$ to define relative quantifiers as follows
\bea
\label{RelativeM_first def}
 M^{(n)}_{b_n,x} (\rho || \sigma ) &=& (-1)^n\Tr [\rho (\ln (x\rho \sigma^{-1})-b_n )^n] = M^{(n)}(x\rho\sigma^{-1};b_n)+b^n_n 
 \nonumber \\
&=& (-1)^n \Tr [\rho (\ln \rho -\ln \sigma + \ln (x)-b_n)^n] \,.
\eea
The  first two quantifiers of the sequence reduce to (minus) the relative entropy and $M_x (\rho || \sigma)$ defined in (\ref{relM2}) by
\bea
M^{(2)}_{1, x} (\rho ||\sigma ) &=& M_x (\rho || \sigma) \ .
\eea
More generally, by expanding (\ref{RelativeM_first def}) with relative cumulants and setting $a_n = b_n-\ln(x)$, we have
\begin{eqnarray}\label{relativeM_variousn}
M^{(1)}_{b_1,x}(\rho || \sigma) &=& -S(\rho || \sigma)+a_1\,,  \\
M^{(2)}_{b_2,x}(\rho ||\sigma) &=& [-S(\rho || \sigma)+a_2]^2 + C(\rho || \sigma)\,,   \\
M^{(3)}_{b_3,x}(\rho || \sigma ) &=& [-S(\rho||\sigma)+a_3]^3  +3 C (\rho || \sigma ) [-S(\rho || \sigma) +a_3] - C_3 (\rho || \sigma)\,,  \\
M^{(4)}_{b_4,x}(\rho ||\sigma) &=& [-S(\rho ||\sigma ) + a_4]^4 + 6 C(\rho || \sigma)[-S(\rho||\sigma) +a_4]^2 + 3 C(\rho || \sigma)^2 \nonumber
\\
&&- 4 C_3 (\rho|| \sigma)[-S(\rho ||\sigma) + a_4]+ C_4 (\rho ||\sigma)\,,  \\
 &\cdots&\,,  \nonumber
\end{eqnarray}
where $C_n (\rho || \sigma)$ is the $n^{\rm th}$ cumulant of $\ln (\rho \sigma^{-1})$. Notice that, up to additive constants, the equations in (\ref{relativeM_variousn}) reduce to the ones in (\ref{extrM_variousn}).
This follows by choosing $\sigma =\mathbb{1}/d$ giving $S(\rho || \mathbb{1}/d)=-S(\rho)+\ln d$ and
more generally  $C_j(\rho ||\mathbb{1}/d)=(-1)^jC_j(\rho)$, and setting $a_n=n-1-\ln s_{\textrm{\tiny min}}= n-1+\ln d$.

When $\rho$ and $\sigma$ are commuting density matrices, we can also define a generating function 
\be
k_{\rm rel}(\alpha ; a ) = e^{-\alpha a} \exp [(\alpha-1)S_\alpha (\rho || \sigma)]\,,
\ee
where
\be
S_\alpha (\rho || \sigma)=\frac{1}{\alpha -1} \ln 
\Tr \left[\rho^\alpha\sigma^{1-\alpha}\right]
\ee
is the Petz-R\'enyi relative entropy \cite{Petzbook}. In this case, we get
\be
\label{relativeM from kgen}
 M^{(n)}_{b_n,x}(\rho || \sigma)=\left[(-1)^n e^a \frac{\partial^n}{\partial \alpha^n} k_{\rm rel}(\alpha ; a)\right]\bigg |_{\alpha =1, a= b_n -\ln (x)} \ .
\ee
It would be interesting to generalize the generating function giving (\ref{relativeM from kgen}) to the case of non-commuting $\rho$ and $\sigma$. We leave this investigation for a future analysis.

Likewise, from $P^{(n)}_E$ we obtain the relative quantities
\be
\label{relP_def}
  P^{(n)}_{E,x} (\rho || \sigma) = -\Tr \left[\rho F_n (\ln \rho -\ln \sigma + \ln(x))\right] = -P^{(n)}_E(x\rho\sigma^{-1})\,,
\ee
where $F_n$ is an extremal polynomial that is a solution of $F_n''(y)+F_n'(y)=G_n(y)$, where $G_n(y)$ is of the form given in Theorem 1. 

In Ref. \cite{boes2020} the monotonicity of $M_{s_{\rm min}}(\rho ||\sigma)$ (see \ref{relM2}) has been used to derive a lower bound for relative entropy production. As corollary of the more general theorem (\ref{Eineq}), one can obtain infinitely many inequalities involving the change in relative entropy. If  $(\rho,\sigma)\succ (\rho',\sigma')$ with both $\sigma, \sigma'$ full rank, then
\bea
M^{(n)}_{b_n,s_{\rm min}}(\rho' || \sigma') &\geq& M^{(n)}_{b_n,s_{\rm min}}(\rho || \sigma)\,, \\
  P^{(n)}_{E,s_{\rm min}} (\rho' || \sigma') &\geq&   P^{(n)}_{E,s_{\rm min}} (\rho || \sigma) \ , 
\eea
for all $n\geq 1$. Substituting the explicit form of $M^{(n)}_{b_n,s_{\rm min}}$ from (\ref{RelativeM_first def}) leads to inequalities resembling the ones in Sec. \ref{subsec:ExtremalMonotones},
from which one can solve the relative entropy production with a bound involving relative cumulants to arbitrary order. For example, at order $n=2$, concavity requires $b_2\geq 1$, leading to \cite{boes2020}
\be
\label{rel_ineq_Wilming}
\Delta S_{\rm rel}^2+2\chi \Delta S_{\rm rel} -\Delta C_{\rm rel}\geq 0 
\ee
where $\Delta S_{\rm rel} \equiv S(\rho ||\sigma)-S(\rho' ||\sigma')$, $ \Delta C_{\rm rel} \equiv C(\rho ||\sigma)-C(\rho'||\sigma')$ and $\chi=a-S(\rho ||\sigma)$, with $a\equiv1-\ln s_{\textrm{\tiny min}}$.
The inequality (\ref{rel_ineq_Wilming}) can be rewritten as
the following relative entropy production bound 
\be\label{relentproduction_new}
 \Delta S_{\rm rel} \geq \frac{\Delta C_{\rm rel}}{2 a- S(\rho ||\sigma)-S(\rho' ||\sigma')}  \,.
\ee
 When $\sigma$ and $\sigma'$ are the maximally mixed state, (\ref{relentproduction_new}) becomes (\ref{inequality_new n2}).
The bound in (\ref{relentproduction_new}) can be relaxed obtaining the somewhat less tight inequality (see \cite{boes2020} for the explicit calculation)
\be\label{relentproduction}
 \Delta S_{\rm rel} \geq \frac{\Delta C_{\rm rel}}{2\sqrt{M^{(2)}_{1,s_{\rm min}}(\rho || \sigma)}} \geq \frac{\Delta C_{\rm rel}}{2(1-\ln{s_{\rm min}})} \,.
\ee
 The inequality (\ref{relentproduction}) is the same as that reported in \cite{boes2020} with
$1-\ln{s_{\rm min}}$ replacing $1/\ln 2 - \log_2(s_{\rm min})$  due to the use of  $\ln$ instead of $\log_2$. Notice that the inequality (\ref{Landauer-ineq}) can be recovered from (\ref{relentproduction}) when $\sigma$ and $\sigma'$ are the maximally mixed state. 

Moving to the $P^{(n)}_{E,s_{\rm min}}$, tighter inequalites can be obtained at orders $n\geq 3$. For example, at
order $n=3$,  writing
\be
0 \leq P^{(3)}_{E,s_{\rm min}}(\rho' ||\sigma' ) - P^{(3)}_{E,s_{\rm min}} (\rho ||\sigma) \equiv w_2 a^2 + w_1 a + w_0\,,
\ee
where (using $b_n=n-1$)
\bea
&& w_2 = M^{(2)}(s_{\rm min}\rho'\sigma'^{-1};1)-M^{(2)}(s_{\rm min}\rho\sigma^{-1};1) = M ^{(2)}_{1,s_{\rm min}}(\rho' ||\sigma') - M^{(2)}_{1,s_{\rm min}}(\rho ||\sigma)\,, \\
&& w_1 =  -M ^{(1)}_{0,s_{\rm min}}(\rho' ||\sigma') + M^{(1)}_{0,s_{\rm min}}(\rho ||\sigma)\,, \\
&& w_0 =\frac{1}{3}\big[ M ^{(3)}_{2,s_{\rm min}}(\rho' ||\sigma') - M^{(3)}_{2,s_{\rm min}}(\rho ||\sigma)\big] -\big[M ^{(2)}_{1,s_{\rm min}}(\rho' ||\sigma') - M^{(2)}_{1,s_{\rm min}}(\rho ||\sigma)\big] \,,
\eea
 we obtain 
\be\label{Relineq3compact}
\Delta M^{\rm rel}_3 \geq 3 \Delta M^{\rm rel}_2 + \frac{3}{4} \frac{(\Delta M^{\rm rel}_2)^2}{\Delta M^{\rm rel}_1} \,,
\ee
where now
\be\label{deltaM_extremal}
\Delta M^{\rm rel}_n \equiv  M^{(n)}_{n-1,s_{\rm min}} (\rho' || \sigma') -  M^{(n)}_{n-1,s_{\rm min}} (\rho || \sigma) \ .
\ee

\subsubsection{Finite-size correction to Clausius inequality}

Rather than moving on to derive more explicit forms of these higher order inequalities, let us point out an interesting corollary of (\ref{relentproduction}). A well-known statement is that the non-negativity of relative entropy $S(\rho ||\sigma)\geq 0$ can be rewritten as an inequality 
\be\label{Kineq}
 \Delta \langle K \rangle \equiv \Tr (\rho K) - \Tr (\sigma K) \geq S(\rho) - S (\sigma) \equiv \Delta S\,,
\ee
where\footnote{The inequality (\ref{Kineq}) is also invariant under a change in the normalization of $\sigma$, obtained through the shift of $K$ by the proper constant.} $K = -\ln \sigma$. The refined equation (\ref{relentproduction}) gives a finite-size correction to the inequality (\ref{Kineq}), assuming $[\rho,\sigma]=0$. 
Set $\rho'=\sigma = \sigma'$ in (\ref{relentproduction}), which corresponds to consider a mutually commuting pair of states with the $\sigma$-majorization $\rho \succ_\sigma \sigma$. Now (\ref{relentproduction}) reduces to 
\be
S(\rho ||\sigma) \geq \frac{C(\rho ||\sigma )}{2\sqrt{M^{(2)}_{1,s_{\rm min}}(\rho || \sigma)}} \geq \frac{C(\rho ||\sigma )}{2(1-\ln s_{\rm min})}\,,
\ee
which in turn implies a finite-size correction to (\ref{Kineq}); indeed
\be\label{Kineq2}
  \Tr (\rho K) - \Tr (\sigma K) \geq S(\rho) - S (\sigma) + \frac{C(\rho ||\sigma )}{2\sqrt{M^{(2)}_{1,s_{\rm min}}(\rho || \sigma)}} \ .
\ee
In the context of non-equilibrium thermodynamics, we can choose $\sigma = \gamma_\beta \equiv e^{-\beta H}/Z(\beta)$, where $\beta = 1/(k_B T)$ is the inverse temperature, and $\rho$ is a classical state that thermomajorizes $\gamma_\beta$.
From (\ref{Kineq2}), we obtain  a finite-size correction to a fundamental thermodynamic relation (the Clausius inequality):
\begin{eqnarray}\label{Kineqthermo}
 S (\gamma_\beta )-  S (\rho) &\geq &  \frac{1}{k_BT}[\langle H\rangle_{\gamma_\beta} - \langle H \rangle_\rho]  + \frac{C(\rho ||\gamma_\beta)}{2\sqrt{M^{(2)}_{1,s_{\rm min}}(\rho || \gamma_\beta)}} \\
                                  &\geq &\frac{1}{k_BT}[\langle H\rangle_{\gamma_\beta} - \langle H \rangle_\rho] +\frac{C(\rho ||\gamma_\beta )}{2+ 2\beta (E_{\rm max} -F(\beta))}\ , \nonumber
\end{eqnarray}
where  $\langle H\rangle_\rho $ is the non-equilibrium energy of the state $\rho$, $\langle H\rangle_{\gamma_\beta}$ the energy in the thermal state, $S(\rho)$ the non-equilibrium entropy in the state $\rho$, $S(\gamma_\beta)$ the thermal entropy
(or, more conventionally, $S(\beta)=k_BS(\gamma_\beta)$),  and where $s_{\rm min}$ is the smallest eigenvalue $\exp (-\beta E_{\rm max})/Z(\beta)$ of the Gaussian state $\gamma_\beta$ (the finite dimensional system has a maximum energy eigenvalue $E_{\rm max}$) and $\beta F(\beta)=-\ln Z(\beta)$ is the Helmholtz free energy.
 Again, one could attempt to derive a sequence of more refined inequalities from the infinite sequences of resource monotones above constructed.

Let us remark that the finite-size correction obtained above has no effect on the so-called first law of entanglement. Consider a one-parameter family of states $\rho (\lambda)$, with $\rho (0) =\sigma$. Expanding $\rho (\lambda ) = \sigma + \lambda \delta \rho + \cdots$,
one obtains
\be
\delta S(\rho ||\sigma) \equiv S(\rho(\lambda)||\sigma) - 0 = 0 + \lambda S^{(1)}(\sigma ||\sigma) + \lambda^2 S^{(2)}(\sigma ||\sigma) + \cdots \geq 0 \,. 
\ee
Since $S(\rho ||\sigma)$ has a global minimum at $\rho = \sigma$, at first order we have an equality $\delta S(\rho || \sigma)= \lambda \,S^{(1)}(\sigma ||\sigma )=0$ instead of an inequality, hence (\ref{Kineq}) becomes the equality
\be
  \delta \langle K \rangle = \delta S\,,
\ee
for infinitesimal changes, which is known as the first law of entanglement. The reason why it does not receive finite-size corrections from (\ref{Kineq2}) is that $\rho =\sigma$ is a local minimum of $C(\rho ||\sigma)$, so $\delta C(\rho ||\sigma) =0$ also to first
order, so that the first order equality $\delta S(\rho ||\sigma)=0$ remains intact.

\section{Applications}
\label{sec:applications}

\subsection{Information erasure}
\label{subsec:InfoErasure}


As an application of the inequalities (\ref{j2}) and (\ref{M3inequality}), we study a state erasure process, following \cite{boes2020}. The erasure process is described in two steps. 
Consider first a system in a state $\rho$ to be "erased", {\em i.e.} to be converted to some fixed pure state $|\psi \rangle$. This conversion can be accomplished introducing
an external system $\mathcal{B}$ involving $n$ qubits, an {\em information battery}, which is simultaneously converted from a pure state $|0\rangle^{\otimes n}$ to a maximally mixed flat 
state $(\mathbb{1}/2)^{\otimes n}$. This first step of the process requires the battery to be large enough, meaning that there is a lower bound on the required $n$. Mathematically, this step can be modeled by a unital channel $\Ecal$ 
mapping the input state $\Omega$  of the composite system  to the output state $\Upsilon=\Ecal (\Omega)$. According to the Uhlmann's theorem discussed in Sec.\,\ref{subsec:concavity},  the unital channel implies the majorization relation 
\be
\Omega \equiv \rho \otimes |0\rangle \langle 0|^{\otimes n} 
\succ 
|\psi \rangle \langle \psi | \otimes (\mathbb{1}/2)^{\otimes n} \equiv \Upsilon \, .
\ee  
From the Schur concavity of von Neumann entropy it follows that $S(\Omega) \leq S(\Upsilon)$, which by additivity of $S$ produces the lower bound 
\be\label{firstbound}
n\ln 2 \geq S(\rho )\,.
\ee

In the second step, in order to be able to repeat the procedure, one must restore the battery back to the initial state $|0\rangle^{\otimes n}$. This step requires work to be done on the battery, and can be performed in different ways (see {\em e.g.} \cite{faistthesis} for more discussion and references). We consider
here the process described in \cite{dahlsten}, where the battery consists of $n$ identical copies of a tunable two-state system and the energy gap of the two energy eigenstates can be parametrically adjusted to be anything from zero towards infinity. One
places the battery in contact with a heat bath at temperature $T$,  with the energy gap initially being zero, and then adiabatically increases the gap towards infinity. This requires work $W$ to be performed on the battery. 
At the end, each two-state system is in the ground state $\ket{0}$ with probability 1,
so that the battery is restored to the state $\ket{0}^{\otimes n}$, at the expense of the work cost involved in the adiabatic process, which can be calculated to be  
\be\label{work}
W =k_BTn\ln 2\ .
\ee  
One then removes the battery from the contact with the heat bath, after which the
energy gap can be reduced back to zero while doing no work. The battery is now ready to be used again.  

The two equations (\ref{firstbound}) and (\ref{work}) can be combined as the Landauer inequality 
\be
  W \geq k_B T S(\rho) \,, 
\ee
for the work cost of erasure. Note that erasure deletes all details of the eigenvalue spectrum of $\rho$, so it is
natural to ask if additional details of the distribution of the spectrum  beyond just the entropy will have an effect on the work cost of erasure. Indeed we find that all cumulants of the spectrum can be used to characterize
the cost.

The inequality (\ref{firstbound}) has been derived from $S$, but now we may consider the infinite sequence inequalities following from the monotones $M^{(m)}$.
Since $M^{(m)}$ are (Schur) concave, we have that $\Omega \succ \Upsilon \Rightarrow M^{(m)}(\Upsilon ,b_m) \geq M^{(m)}(\Omega ,b_m)$. First we compute 
\be
M^{(m)}(\Upsilon ,b_m ) = \sum^{m}_{k=1}  {m \choose k} b^{m-k}_m \mu_k (\Upsilon ) \,,
\;\;\;\qquad\;\;\;
\mu_k (\Upsilon ) = \sum'_{p_j}k!\prod_j \frac{1}{p_j!(j!)^{p_j}} C^{p_j}_j (\Upsilon ) \ ,
\ee
where $\sum'$ is the restricted sum introduced in (\ref{momentstocumulants}) and, by additivity of cumulants (\ref{additivity_Cn}) and the fact that they are zero for pure states, we have
\be
C_j (\Upsilon ) = 0 + n\,C_j (\mathbb{1}/2) \ .
\ee
Since $C_j (\mathbb{1}/2)=0$ for $j\geq2$, in the above sum only the partition $k=1+1+\cdots +1$ remains, 
hence%
\be
M^{(m)} (\Upsilon ,b_m ) = \sum^m_{k=0}{m \choose k} b^{m-k}_m (n\ln 2)^k = (n\ln 2+b_m)^m \ .
\ee
Next, we have
\begin{eqnarray}
M^{(m)}(\Omega ,b_m) &=& \sum^{m}_{k=0}  {m \choose k} b^{m-k}_m  \sum'_{p_j}k!
\prod_j \frac{1}{p_j!(j!)^{p_j}} \;C^{p_j}_j (\Omega )\,,
\end{eqnarray}
where (using again the additivity (\ref{additivity_Cn}))
\be
C_j (\Omega ) = C_j (\rho ) + 0 \ .
\ee
We thus have confirmed that
\be
M^{(m)} (\Omega ,b_m) = M^{(m)} (\rho ,b_m) \ .
\ee
Now we employ the cumulant expansion (\ref{momentstocumulants2}).
Setting for simplicity $b_m$ to be
the smallest possible value $b_m=m-1$ allowed by concavity, we obtain the sequence of inequalities for the work cost $W= k_BTn\ln 2$ with\footnote{There is again a subtlety with conventions: we get $S((\mathbb{1}/2)^{\otimes n}) = n \ln 2$, whereas (\ref{Msequence1}) as quoted in \cite{boes2020} involves $n$, due the convention of using binary logarithms producing
$\tilde S((\mathbb{1}/2)^{\otimes n})=n$. }  
\begin{eqnarray}\label{Msequence1}
 n\ln 2 &\geq& S(\rho)\,, \\
\label{Msequence2}
 (n\ln 2+1)^2 &\geq& \big[S(\rho)+1\big]^2 + C(\rho ) \,,\\
\label{Msequence3}
 (n\ln 2+2)^3 &\geq& \big[S(\rho)+2 \big]^3  + 3 C (\rho) \big[S(\rho) +2\big] + C_3 (\rho) \,,\\
 (n\ln 2+3)^4 &\geq& \big[S(\rho) + 3\big]^4 
 + 6 C(\rho )\, \big[S(\rho) +3\big]^2 + 3 C(\rho)^2 + 4 C_3 (\rho)\,\big[S(\rho) + 3\big] + C_4 (\rho)\,, 
 \hspace{1cm} \\
 &\cdots&\,.  \nonumber
\end{eqnarray} 

Let us remark on the relative significance of the various terms on the right hand side of these inequalities. We point out that, while $S_{\max} =\ln d$ by the maximally mixed state $\rho_{\rm flat}=\frac{1}{d}\mathbb{1}$,
we have $C_{\max }\approx \frac{1}{4}\ln^2 (d-1)$ \cite{Reeb_2014, deBoer:2018mzv, boes2020} by a different state $\rho_{\max} = \diag (1-r,\frac{r}{d-1},\ldots, \frac{r}{d-1} )$, where $r$ is the solution of $(1-2r)\ln [(1-r)(d-1)/r]=2$.
One might anticipate that the higher cumulants have maximum values $C_{n,\max}\propto \ln^n d$ each reached at a different state $\rho_{n,\max}$, but this remains (to our knowledge) an open problem.

Let us now compare these inequalities with those ones obtained from the extremal polynomial monotones. The first two are the same as in the above:  $P^{(1)}_E = -S=-M^{(1)}$ gives the Landauer bound (\ref{Msequence1}),
while 
the inequality from $P^{(2)}_E = -M^{(2)}$ will be the same as (\ref{Msequence2}).  Let us verify the latter inequality starting from (\ref{j2}), which first gives $ n\ln 2 \geq S(\rho ) + \frac{C(\rho)}{S(\rho)+n\ln 2 +2}$, or
\be
\label{Landauer-finite-v2}
 n\ln 2+1 \geq \sqrt{(S(\rho ) +1)^2 + C(\rho)} \, ,
\ee
which coincides with (\ref{Msequence2}). 
This is already a stronger bound than the one coming from the inequality (\ref{Landauer-ineq}) involving $S$, $C$ and $M$,
which reads 
\be\label{Landauer-finite}
 n\ln 2\geq S(\rho) + \frac{C(\rho )}{2\sqrt{M(\rho )}} \, .
\ee 

Consider then the tighter third order inequality (\ref{M3inequality}) applied to the erasure process. 
We need the moments of modular Hamiltonian $\mu_n$ (see (\ref{lincombmoments})) for the states 
$\Omega$ and $\Upsilon$ in terms of the cumulants. 
From (\ref{momentstocumulants}), for $\Omega= \rho \otimes |0\rangle \langle 0|^{\otimes n}$ we have
\bea
&& \mu_1(\Omega) = S(\rho)\,,  \\
&& \mu_2 (\Omega) = S^2 (\rho) + C(\rho )\,,  \\
&& \mu_3(\Omega) 
= S^3 (\Omega) + 3S(\Omega)\,C(\Omega )+ C_3(\Omega) 
= S^3(\rho) + 3S(\rho)\,C(\rho) + C_3(\rho)\,,
\eea
and for $\Upsilon=|\psi \rangle \langle \psi | \otimes (\mathbb{1}/2)^{\otimes n}$ we obtain 
$\mu_k(\Upsilon)=(n\ln 2)^k$. 
After some calculation, we find 
\bea
\label{Landauer3}
&&(n\ln 2+1)^3 - [S(\rho )+1]^3 + 3 [n\ln 2-S(\rho )] - [C_3(\rho ) + 3C(\rho)(S(\rho )+1)] \nonumber 
\\
\rule{0pt}{.7cm}
&& 
\geq \frac{3}{4}\frac{[(n\ln 2+1)^2 -(S(\rho) +1)^2 -C(\rho )]^2}{(n\ln 2-S(\rho))} \ .
\eea
From (\ref{Ineq3compact}), we already know that this is a stronger inequality than (\ref{Msequence3}).
A weaker form of the inequality (\ref{Ineq3compact}) is $\Delta M_3 \geq 3 \Delta M_2$ and it gives
\be\label{Landauer3weak}
 (n\ln 2+1)^3 + 3 n\ln 2 \geq (S(\rho )+1)^3 + 3 S(\rho )  + 3C(\rho)(S(\rho )+1) +C_3(\rho ) \ ,
\ee
which is still somewhat stronger than (\ref{Msequence3}).

Moving to higher order extremal polynomial monotones, one can derive an infinite sequence of (increasingly more complicated) inequalities, involving higher cumulants of modular Hamiltonian. 
As already noted, it is possible that the sequence contains some interesting hierarchical structure, {\em e.g.} the right hand side of (\ref{Landauer3}) contains a ratio of quantities that appears in 
the two previous inequalities (\ref{Msequence2}) and (\ref{Msequence1}).

In the outlook section of \cite{boes2020}, it was commented that it would be interesting to construct a hierarchy of (Schur-)concave functions from cumulants of modular Hamiltonian relevant to single-shot settings. We have proposed two such sequences: the moments of shifted modular Hamiltonian and the extremal polynomial monotones. The latter one could be relevant to provide tight inequalities. It would be interesting to further explore the hierarchical structrues contained in the two sequences. Furthermore, besides a "sequence of Landauer inequalities", the majorization relation in the information erasure already contains many inequalities for the partial sums of the two eigenvalue spectra. Since any concave quantifier
produces an inequality for the information erasure, and thus some kind of a bound on $n$, there is obviously an uncountable infinity of inequalities that one could consider. The above sequences are special only in the sense that they contain the entropy
and higher cumulants.

Majorization gives a partial ordering among density matrices, but one can also define another partial ordering based on the sequences of inequalities from the monotones. It would be interesting to study the relation between these two orderings in more detail and some tentative observations are made in Sec.\,\ref{sec:conclusions}.

\subsubsection{Lower bound on marginal entropy production}

The inequality (\ref{j2}) offers a slight tightening of the lower bound on marginal entropy production derived in \cite{boes2020}. Our discussion follows again \cite{boes2020} but with some small modifications. Consider a quantum channel $\Ecal$ mapping a system $\mathrm{S}$ to itself. In order to represent the channel $\Ecal$ acting on a state $\rho_{\rm S}$ of the system, we introduce an environment $\mathrm{E}$ in a state $\rho_{\rm E}$ and a {\em unital} channel\footnote{By the Stinespring dilation
one could choose a unitary channel, but we follow \cite{boes2020} and consider this more general possibility.} $\Ucal$ to write
\be
 \Ecal (\rho_{\rm S}) = \mathrm{Tr}_{\rm E} [\Ucal (\rho_{\rm S} \otimes \rho_{\rm E})] \ .
\ee
 We denote by 
\be
\rho'_{\rm SE} = \Ucal (\rho_{\rm S}\otimes \rho_{\rm E})
\ee
the state of the joint system $\mathrm{S}\cup\mathrm{E}$ after the application of the unital channel. Since $\Ucal$ is unital, by the Uhlmann's theorem mentioned in Sec.\,\ref{subsec:concavity}, we have the majorization $\rho_{\rm E} \otimes \rho_{\rm S} \succ \rho'_{\rm SE}$ and therefore, using (\ref{j2}), we obtain
\be\label{step1}
(S(\rho'_{\rm SE})+1)^2 - (S(\rho_{\rm S}\otimes \rho_{\rm E})+1)^2 \geq C(\rho_{\rm S} \otimes \rho_{\rm E}) - C(\rho'_{\rm SE}) \,,
\ee
which means that the entropy production in the joint system  is bounded from below by the decrease of the variance. 
The right hand side can be manipulated using a result (Lemma 11) from \cite{boes2020} which takes the form of a 
correction to the subadditivity of variance. To describe that result we consider 
a bipartite system $\mathrm{S}\cup\mathrm{E}$ of dimension $d=d_{\mathrm{S}}d_{\mathrm{E}}\geq 2$, where $\rho_\mathrm{SE}$ and $ \sigma_\mathrm{SE}$ are two commuting quantum states. If $\sigma_\mathrm{SE} = \sigma_{\mathrm{S}} \otimes \sigma_\mathrm{E}$ and is full rank,
then\footnote{There are again some rescalings by factors of $\ln 2$, due to our convention of using $\ln$ instead of $\log_2$.}
\be
C(\rho_\mathrm{SE} ||\sigma_\mathrm{SE}) \leq C(\rho_\mathrm{S} || \sigma_\mathrm{S}) + C(\rho_\mathrm{E} || \sigma_\mathrm{E} ) + \kappa \, f (I_{\mathrm{SE}}/\ln 2)\,,
\ee
where $\rho_\mathrm{S} = \Tr_E \rho_\mathrm{SE},\, \rho_\mathrm{E} = \Tr_\mathrm{S}\rho_\mathrm{SE}$, $\kappa$ is a constant given by
\be
\kappa = \sqrt{2\ln 2}\,\big(12\ln^2 2 + \ln^2 s_{\rm min} + 8\ln^2 d\big)\,,
\ee
where $s_{\rm min}$ is the smallest eigenvalue of $\sigma$,  $I_{{\mathrm{SE}}}$ is the mutual information
\be
 I_{\mathrm{SE}}= S(\rho_\mathrm{SE} || \rho_\mathrm{S} \otimes \rho_\mathrm{E})
\ee
of $\rho_{\rm SE}$ with respect to the bipartition and $f(x)=\max \{x^{1/4},x^{1/2}\}$.  As a special case, choosing $\sigma_\mathrm{SE}=(\mathbb{1}_\mathrm{S}/d_\mathrm{S})\otimes (\mathbb{1}_\mathrm{E}/d_\mathrm{E})$ with $s_{\rm min}= \frac{1}{d_\mathrm{S}d_\mathrm{E}}=\frac{1}{d}$, we obtain the following correction to the subadditivity of variance
\be\label{Csubadd}
C(\rho_{\rm SE}) \leq C(\rho_{\rm S}) + C(\rho_{\rm E}) + \kappa  f(I_{\rm SE}/\ln 2) \ ,
\ee
where now 
\be
\kappa = \sqrt{2\ln 2}\,\big(12\ln^2 2 + 9 \ln^2 d\big)\,.
\ee
We can now turn back to (\ref{step1}) and bound the right hand side that equation using the additivity of the variance (\ref{additivity_Cn}), $C(\rho_\mathrm{S}\otimes \rho_\mathrm{E})=C(\rho_\mathrm{S})+C(\rho_\mathrm{E})$ and the correction to the subadditivity (\ref{Csubadd}) to decompose $C(\rho'_{\rm SE})$, ending up with 
\be
 C(\rho_{\rm S} \otimes \rho_{\rm E}) - C(\rho'_{\rm SE})  \geq -\Delta C_{\rm S} -\Delta C_{\rm E} - \kappa\,  f(I_{\rm SE}/\ln 2)  \ ,
\ee
where we have defined $\Delta C_{\rm S} = C(\rho'_{\rm S})-C(\rho_{\rm S})$ and $\Delta C_{\rm E} = C(\rho'_{\rm E})-C(\rho_{\rm E})$. Note that, for decreasing variance, $-\Delta C\geq 0$. In the left hand side of (\ref{step1}) we can use the additivity and subadditivity of entropy to write
\be
(S_{\rm S,E} + \Delta S_{\rm S,E}+1)^2 - (S_{\rm S,E}+1)^2 \geq (S(\rho'_{\rm SE})+1)^2 - (S(\rho_{\rm S}\otimes \rho_{\rm E})+1)^2\,,
\ee
where $S_{\rm S,E}\equiv S(\rho_{\rm S})+ S(\rho_{\rm E})$ and $\Delta S_{\rm S,E}\equiv\Delta S_{\rm S} + \Delta S_{\rm E} \equiv S(\rho'_{\rm S})-S(\rho_{\rm S})+ S(\rho'_{\rm E})-S(\rho_{\rm E})$. Combining everything together, and expressing the resulting inequality for the
entropy production, we have 
\be
\label{bound_marginalent_new}
\Delta S_{\rm S} + \Delta S_{\rm E} \geq (S(\rho_{\rm S}) + S(\rho_{\rm E})+1)\left\{ \left[ 1 + \frac{-\Delta C_{\rm S} -\Delta C_{\rm E} - \kappa \, f(I_{\rm SE}/\ln 2)  }{ (S(\rho_{\rm S}) + S(\rho_{\rm E})+1)^2}\right]^{1/2}-1 \right\} \, ,
\ee
which is a slightly tighter lower bound for marginal entropy production, compared to Result 4 in \cite{boes2020}. We emphasize that the result is non-trivial only when the numerator in the ratio in the square brackets is positive (the total decrease in 
variances is sufficiently large). If the ratio in the square bracket in the right hand side of (\ref{bound_marginalent_new}) is much smaller than one, we can employ the Taylor expansion, obtaining from the leading approximation
\be
\Delta S_{\rm S} + \Delta S_{\rm E} \gtrsim \frac{-\Delta C_{\rm S} -\Delta C_{\rm E} - \kappa \, f(I_{\rm SE}/\ln 2)  }{ 2(S(\rho_{\rm S}) + S(\rho_{\rm E})+1)},
\ee
which can be easily compared against the inequality in \cite{boes2020}\footnote{Note that we have opposite sign conventions for $\Delta S$, $\Delta C$.}. 
As an application, \cite{boes2020} considered state transitions with a help of a catalytic system $\mathrm{C}$ and derived the lower bound $d_\mathrm{C} \geq O[\exp (\delta^{-1/8})]$ for the necessary dimension $d_\mathrm{C}$ of  $\mathrm{C}$ for a state transition, where the variation of entropy $\delta$ is small while variance is reduced. 
Similarly, it would be interesting to consider what are the implications of the other inequalities in the sequences considered in this manuscript, which involve higher cumulants. For that,
one would need to first derive subadditivity properties of higher cumulants $C_n$, generalizing the result (\ref{Csubadd}) for the variance.

\subsection{Majorization and states of periodic chains
}
\label{subsec:thougthexp}

Majorization is a central concept in finite dimensional quantum systems, which provides a classification of bipartite entanglement for pure states. What can be said about pure states in quantum field theories? Consider a pair of pure states
 $\ket{\Phi}$ and $\ket{\Psi}$ in a QFT, is there any meaningful definition for a relation $\ket{\Phi} \succ \ket{\Psi}$? First of all, bipartitioning involves an
ultraviolet cutoff, making some entanglement monotones divergent. Also, the definition of majorization via the reduced states and partial sums of ordered eigenvalues becomes cumbersome and very sensitive to the choice of UV regulator. In Sec.\,\ref{CFT} we will explore this question to gain some tentative insight. We consider a pair of states in a CFT which is a continuum limit of a discrete model at criticality. As a concrete example, we will consider discrete versions of the compact boson at critical radius and free fermion CFT on a circle, which are different critical limits of the periodic anisotropic XY spin chain. The latter can be mapped to free periodic fermionic chains
with a finite dimensional Hilbert space. The original pair of CFT states is then mapped to their counterpart in the fermionic chain.  In that case majorization becomes well defined, and  we can apply rigorous theorems from quantum information theory and
compare bipartite entanglement between the two states. The results will then reflect some properties of the original CFT states, which have to be interpreted with care, as the maps between the spin chains and fermionic chains are typically non-local.

Our goal is to take some first steps, leaving more exhaustive studies for future work.  We will limit to comparing a ground state $\ket{\Phi}\equiv \ket{0}$ with an excited state $\ket{\Psi}$ and their counterparts in the periodic fermionic chain.
We bipartite the periodic chain of finite length $L$ made by $N$ sites 
(hence the Hilbert space of the full system has dimension $2^N$)
into a subregion made by $\ell$ consecutive sites and its complement. Let us denote this subsystem by $A$ and its complement as $B$.
We denote the reduced states in the subsystem $A$ by
\be
\rho_A \equiv \Tr_B (\ket{\Phi} \!\bra{\Phi} ) \,, 
\;\;\;\qquad\;\;\;  
\sigma_A \equiv \Tr_B (\ket{\Psi}\!\bra{\Psi}) \ .
\ee
Suppose that $\ket{\Phi} \succ \ket{\Psi}$  or equivalently $\rho_A \succ \sigma_A$. Note that this relation depends on the choice of the bipartition to $A$ and $B$, hence on the relative size of the two lengths $\ell /L$. While it is
rather laborius to establish majorization, we ask an easier question of whether it can be ruled out. For this purpose we can use any Schur concave quantity $E_A$ applied to the reduced
states:
assuming that $\rho_A \succ \sigma_A$  ($\ket{\Phi} \succ \ket{\Psi}$) then $\Delta E_A\equiv E_A (\sigma_A) -E_A (\rho_A) \geq 0$, thus if we instead find $\Delta E_A \leq 0$ the  
assumed majorization order cannot hold. Consequently any process converting one state into the other implying the assumed order is impossible (for example a LOCC process 
$\ket{\Psi} \underset{\textrm{\tiny LOCC}}{\longrightarrow} \ket{\Phi}$). The converse analysis naturally can be applied to the opposite assumption $\sigma_A \succ \rho_A$ and opposite processes. 
In this setting it becomes interesting to compare which quantity $E_A$ gives the most stringent bound.

For the Schur concave quantities $E_A$ to study, we only consider here the entanglement entropy $S_A$, the R\'enyi entropies $S^{(2)}_A,\,S^{(3)}_A$ and the new monotone $M^{(2)}_A(\cdot ,1)=-P^{(2)}_A (\cdot )$. The latter gives the inequality (\ref{j2}), which must be statified
if majorization holds. 
 All in all, in studying the various monotones, we will compare which one gives the tightest bound for the
range of $\ell /L$ where the majorization order must be violated. These results are discussed in detail in Sec.\,\ref{subsec:thoughtexpCFT}. 
Before such comparisons, we will examine the cutoff dependence of $S_A,C_A$ and $M^{(2)}_A$. We will also compare the entropy $S_A$ with the capacity $C_A$ for the two 
states, and examine how they depend on the relative size $\ell /L$.

\section{Pairs of bipartite pure states in 1+1 CFT and periodic chains}
\label{CFT}

In this section we first evaluate $S_A$ and $C_A$ in one dimensional 
and  translation invariant  systems. We review known  results for R\'enyi entropies  for a ground state and a class of excited states. From the R\'enyi entropies we first compute the difference $S_A-C_A$ and study how it changes moving from 
the ground state to an excited state. We then compute monotones and inequalities, as outlined in Sec.\,\ref{subsec:thougthexp}, and find the constraints they present for the majorization order between a ground state and an excited state.

\subsection{Some known CFT results}

In this section we study $S$, $C$ and $M$ (see (\ref{defcapacity}) and (\ref{Mdefinition})) in simple bosonic, fermionic conformal field theories and related discrete models (which can be mapped to fermionic chains).   Considering a 2D CFT
for certain states (including the ground state) and when the subsystem $A$ is a single interval of length $\ell$,  it has been found that \cite{Callan:1994py,Holzhey:1994we,Calabrese:2004eu,Cardy:2016fqc} 
\be
\label{Trrhon_CFT}
\textrm{Tr}\rho_A^n
\,=\,
c_n \;
e^{-\frac{c}{12}\big(n-\frac{1}{n}\big)W_A}+\dots\,,
\ee
where $c$ is the central charge of the CFT and
$W_A$ is a function of $\ell$ which diverges as the UV cutoff $\epsilon\to 0$ and depend also on the state and on the geometry of the entire system.
The constant $c_n$ is model and boundary condition dependent and $c_1 = 1$ 
(because of the normalisation of $\rho_A$).
For instance, when the system is on the infinite line and in its ground state,
when the system is on the circle of length $L$ and in its ground state
or when the system is on the infinite line and at finite temperature $1/\beta$,
for $W_A$ we have respectively
\be
\label{W_A CFT-known}
W_A=2\ln\!\bigg(\frac{\ell}{\epsilon}\bigg)\,,
\;\;\qquad\;\;
W_A=2\ln\!\bigg(\frac{L}{\pi\epsilon}\sin\frac{\pi \ell }{L}\bigg)\,,
\;\;\qquad\;\;
W_A=2\ln\!\bigg(\frac{\beta}{\pi\epsilon}\sinh\frac{\pi \ell}{\beta}\bigg)\,.
\ee

By employing (\ref{Trrhon_CFT})  into the definitions (\ref{defentropy}) and (\ref{defcapacity}),
it is straightforward to find that  \cite{deBoer:2018mzv}
\be
\label{capacityequalentropy}
C_A = S_A = \frac{c}{6}\,W_A + O(1)\,,
\ee 
and that $C_A$ and $S_A$ 
differ at the subleading order $O(1)$ determined by the non-universal constant 
$c_n$\footnote{In higher dimensional CFTs and more general quantum field theories, 
the relation is more ambiguous; indeed the UV cutoff in the two quantities appears in a power law
and the quantities become more dependent on the regularization scheme (see \cite{deBoer:2018mzv} for more discussion).}.


In the following we report the expression of $ M^{(n)} (\rho;b_n )$ defined in (\ref{Mn Tr Fn}) for a CFT on the line in its ground state and  
an interval $A$ of length $\ell$. 
By using (\ref{Trrhon_CFT}) and (\ref{Mn from Sn}), for the leading term we find
\be
\label{Mn LeadingCFT}
 M^{(n)}_A (b_n)
=
\left(
\frac{\ln (\ell / \epsilon)}{3} 
\right)^n
+
O\Big(\big(\ln (\ell / \epsilon)\big)^{n-1}\Big)\,,
\ee
where the subleading terms in $\ell / \epsilon$ depend both on the non-universal constants and on the parameter $b_n$. 
For instance, in the special case of $n=2$ we get
\be
 M^{(2)}_A (b_2)=\left(
\frac{\ln (\ell / \epsilon)}{3} 
\right)^2
+
\frac{1}{3}\left(
1
+
2 b_2
-2 c'_1
\right)
\ln (\ell / \epsilon)
+O(1)\,,
\ee
where the subleading terms that we have neglected are finite as $\epsilon$ vanishes. 
In the following, with a slight abuse of notation,  
we denote by $\ell$ both the number of consecutive sites in a block $A$ and the length of the corresponding interval $A$ in the continuum. 
This convention is adopted also for the number of sites of a finite chain 
and for the finite size of the corresponding system in the continuum limit, both denoted by $L$.

\subsection{Excited states in CFT}


Consider a CFT in a circle of length $L$ in the excited state of the form
$| \textrm{ex}\rangle = \mathcal{O}(0,0) | \textrm{gs}\rangle $,
obtained by applying the operator $\mathcal{O}$ on the ground state. 
The subsystem is an interval $A$ of length $\ell < L$ in a circle of length $L > \ell$.
\\
The R\'enyi entropies in the low-lying excited states in CFT
have been studied in \cite{Alcaraz:2011tn, Berganza:2011mh},
finding that the following ratio provides a UV finite scaling function
\be
\label{Fratio}
F_{\mathcal{O}}^{(n)}(\ell/L)
\equiv
\frac{\textrm{Tr}(\rho_{\mathcal{O}  ,A}^n)}{\textrm{Tr}(\rho_{\textrm{\tiny gs},A}^n)}
=
e^{(1-n)\big( S_{\mathcal{O}  ,A}^{(n)}-S_{\textrm{\tiny gs},A}^{(n)}\big)}\,,
\ee
where $S_{\mathcal{O},A}^{(n)} $ and $S_{\textrm{\tiny gs},A}^{(n)} $ 
denote the R\'enyi entropies when the system is either in the excited state 
or in the ground state respectively.
The moments $\textrm{Tr}(\rho_{\textrm{\tiny gs},A}^n)$ in (\ref{Fratio})
are (\ref{Trrhon_CFT}) with $W_A$ given by the second expression in (\ref{W_A CFT-known}),
while the R\'enyi entropies for the excited state read
\be
\label{Renyi-exc}
S_{\mathcal{O},A}^{(n)}
=
\frac{c}{6}\left(1+\frac{1}{n}\right) \ln\!\left(\frac{L}{\pi \epsilon}\sin\frac{\pi \ell}{L}\right)
+\frac{1}{1-n}\ln\big[F_{\mathcal{O}}^{(n)}(\ell/L)\big]
+ \ln c_n\,.
\ee
In \cite{Alcaraz:2011tn, Berganza:2011mh} 
it has been found that the ratio (\ref{Fratio}) is obtained from
a proper $2n$-point correlator of $\mathcal{O}$.
This gives
\be
\label{entropy-cap-exc}
S_{\mathcal{O},A}
=
S_{\textrm{\tiny gs},A}
-
\frac{d}{d n}\left(\ln F_{\mathcal{O}}^{(n)}(\ell/L)\right)\!  \Big|_{n=1}\,,
\;\;\qquad\;\;
C_{\mathcal{O},A}
=
C_{\textrm{\tiny gs},A}
+
\frac{d^2}{d n^2}\left(\ln F_{\mathcal{O}}^{(n)}(\ell/L)\right)\!  \Big|_{n=1}\,.
\ee
In the following we explicitly consider only two examples of excited states
where \cite{Berganza:2011mh,Essler_2013, Calabrese_2014}
\be
\label{F_O-gamma}
F^{(n)}_{\mathcal{O} } (\ell / L )
=
\big[ f_n(\ell / L )\big]^{\gamma}\,,
\ee
where only the exponent $\gamma$ distinguishes the two states and
\be
\label{F-function-def}
f_n(\ell / L )
\equiv
\left(\frac{2}{n}\sin(\pi \ell / L) \right)^{2n}\,
\left(
\frac{
\Gamma \big(\frac{1}{2} \big[1+n+n\csc( \pi \ell /L) \big] \big)
}{
\Gamma \big(\frac{1}{2} \big[1-n+n\csc( \pi \ell /L) \big] \big)
}
\right)^2\,.
\ee
From (\ref{entropy-cap-exc}), (\ref{F_O-gamma}) and (\ref{capacityequalentropy}),
one obtains the following UV finite combination
\bea
\label{S-minus-C-current}
S_{\mathcal{O}  ,A} &-& C_{\mathcal{O} ,A}
=
-\,\gamma
\Big(
\partial_n \big[\ln f_n(\ell/L)\big]  \big|_{n=1}
+
\partial^2_n \big[\ln f_n(\ell/L)\big]  \big|_{n=1}
\Big)
- c_1'- [\partial^2_n(\ln c_n)]\big|_{n=1}
\nonumber
\\
\hspace{-1.0cm}
\rule{0pt}{.9cm}
&=&
- \,2\,\gamma\left(
\ln\big|2\sin (\pi \ell/L)\big|
+\psi \! \left(\frac{1}{2 \sin (\pi \ell/L)} \right)
+\sin (\pi \ell/L)
\right)
- c_1'
\\
\rule{0pt}{.7cm}
& &
-\, 2\,\gamma\left(
-1
+\frac{1}{ \sin (\pi \ell/L)} 
+\psi' \! \left(\frac{1}{2 \sin (\pi \ell/L)} \right)
-\big[1+ \sin (\pi \ell/L) \big]^2
\right)
- [\partial^2_n(\ln c_n)]\big|_{n=1}\,,
\nonumber
\eea
where $\psi(x)$ is the digamma function and $\psi ' (x)$ its derivative. 
When $\ell /L \ll 1$, 
since $\psi(1/x)\simeq -\ln x$ and $\psi'(1/x)\simeq x$ as $x\to 0$,
we have that $\tfrac{c}{3} \ln(\ell /\epsilon)$ is the leading term 
of both $S_{\mathcal{O} ,A} $ and  $C_{\mathcal{O} ,A} $.
Furthermore,  the combination in (\ref{S-minus-C-current}) becomes $- [\partial^2_n(\ln c_n)]\big|_{n=1}- c_1'$ in this limit. 
%
The difference between the excited state and the ground state becomes invisible in the short interval limit.
We remark also that the combinations 
$S_{\mathcal{O},A} - S_{\textrm{\tiny gs},A}$ 
and $C_{\mathcal{O},A} - C_{\textrm{\tiny gs},A}$  (see (\ref{entropy-cap-exc})) are UV finite.


\subsection{Massless compact boson}
\label{subsec:compactboson}

Our first example CFT is the massless compactified scalar field, whose central charge is $c=1$. Its action is
\begin{equation}\label{compactifiedboson}
I = \frac{g}{4\pi}\int d^2 x \,\partial_\mu \phi \,\partial^\mu \phi,
\end{equation}
with a field compactification radius $R$ such that $\phi\sim \phi + 2\pi j  R$, $j\in\mathbb{Z}$.
Interestingly, 
for this model
it has been found in \cite{Berganza:2011mh} that,
when the excited state is given by a vertex operator
$\mathcal{O} =\, : \! e^{\textrm{i}\alpha \phi+\textrm{i}\bar\alpha \bar\phi}\! :$,
the scaling function (\ref{Fratio}) is equal to one identically. Therefore this excited state has the curious property that its bipartite entanglement structure is unchanged from the ground state.
We will thus move to consider other excited states.

A non-trivial result for  (\ref{Fratio}) is found
when $\mathcal{O}=\textrm{i} \partial\phi$ is the current.
In this case (\ref{F_O-gamma}) holds with $\gamma=1$
\cite{Berganza:2011mh,Essler_2013, Calabrese_2014};
hence $F_{\textrm{i} \partial\phi }^{(n)}= f_n $.
Although $F_{\textrm{i} \partial\phi }^{(n)}$ is independent of $R$, for a numerical check we consider a specific value of the compactification radius in order to give an explicit value to the non-universal constants in (\ref{S-minus-C-current}). At the self-dual point, namely when $g R^2=1 $, the masseless compact boson can be studied as the continuum of a free fermion on the lattice described by the Hamiltonian
\be
\label{MasslessFFLattice}
\widehat{H}=-\sum_{j=-\infty}^\infty(\hat{c}^{\dagger}_{j+1}\hat{c}_j+\hat{c}^{\dagger}_j \hat{c}_{j+1})\,,
\ee
where the fermionic operators $\hat{c}_j$ satisfy the anti-commutation relations $\{ \hat{c}_j,\hat{c}^{\dagger}_k \}=\delta_{jk}$.
Indeed, the continuum limit of this free fermionic chain is the massless Dirac field theory, which in the low-energy regime is formulated, through bosonization techniques, as a massless compact free boson \cite{DiFrancescobook,Tsvelikbook}.
The XX spin chain can be mapped into the free fermionic chain by a Jordan-Wigner transformation.
This implies that we can employ 
the non-universal constant term $c_n$ 
found in \cite{JinKorepin04} through the Fisher-Hartwig theorem,
finding that $-c_1'\simeq 0.726 $ and $[\partial^2_n(\log c_n)]\big|_{n=1}\simeq 0.535 $,
as discussed in the Appendix \ref{app:FHstory}.

\begin{figure}[t!]
\vspace{.2cm}
\hspace{-1.1cm}
\centering
\includegraphics[width=1.05\textwidth]{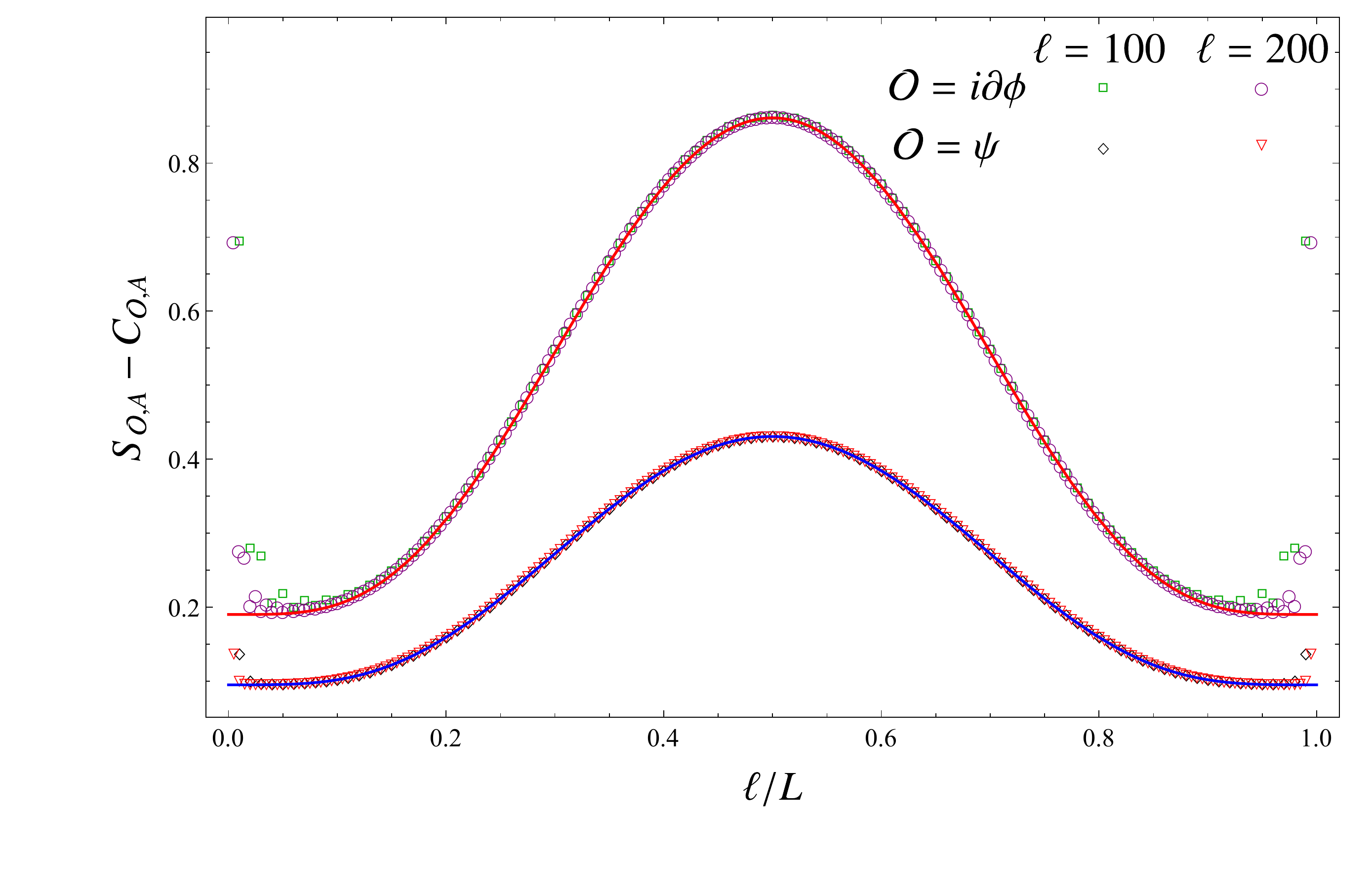}
\vspace{-.4cm}
\caption{
Difference $S_{\mathcal{O}  ,A} - C_{\mathcal{O} ,A}$ as function of $\ell/L$. The data in the top curve have been obtained by considering the excited state of the XX chain corresponding to $\mathcal{O}=\textrm{i} \partial\phi$ in the continuum limit, while the data in the bottom curve have been obtained for the excited state of the Ising chain corresponding to $\mathcal{O}=\psi$ in the continuum limit.
The red and the blue curves are obtained from (\ref{S-minus-C-current}) with $\gamma=1$ and $\gamma=1/2$ respectively. The additive constants for the two curves are reported in Sec.\,\ref{subsec:compactboson} and in Sec.\,\ref{subsec:freefermion}. 
}
\label{fig:bosonexcited}
\end{figure}

%
In Fig.\,\ref{fig:bosonexcited} we show $S_{\textrm{i} \partial\phi,A }-C_{\textrm{i} \partial\phi,A }$ for a block of $\ell$ consecutive sites in periodic chains of free fermions made by $L$ sites.
The top curve is for the XX chain which corresponds to the compact boson CFT. 
The numerical data are obtained through the methods described in \cite{Alcaraz:2011tn, Berganza:2011mh}. 
In the figure, it overlaps with the solid curve, obtained from \eqref{S-minus-C-current} with $\gamma=1$, where the additive constants are specified above. A very good agreement is observed between the numerical data and the CFT predictions for the compact boson. 

\subsection{Free fermion}
\label{subsec:freefermion}

Let us consider the CFT given by free massless fermion (or, equivalently, the Ising CFT) whose central charge is $c=\frac12$.
In this model, we study the excited states corresponding to the operators $\mathcal{E}$ and $\psi$ \cite{DiFrancescobook}.
\\
In \cite{Berganza:2011mh} it has been shown that, for these states, 
the ratios (\ref{Fratio}) are given by 
(\ref{F_O-gamma}) with $\gamma=1$ and $\gamma=1/2$ respectively, namely
\be
\label{Fratiofermion}
F_\mathcal{E}^{(n)}=F_{\textrm{i} \partial\phi }^{(n)}\,,
\;\;\;\qquad\;\;\;
F_\psi^{(n)}=\sqrt{F_\mathcal{E}^{(n)}}\,.
\ee 
The discretization of the Ising CFT is provided by the critical Ising spin chain, whose Hamiltonian can be obtained as a particular case of the one of the XY spin chain. It reads
\be
H_{\textrm{\tiny XY}}= -\sum_{i=1}^L\left(\frac{1+\alpha}{4}\sigma_i^x\sigma_{i+1}^x+\frac{1-\alpha}{4}\sigma_i^y\sigma_{i+1}^y+\frac{\lambda}{2}\sigma_i^z\right)\,,\label{xytext}
\ee
in terms of the Pauli matrices $\sigma_i^{x,y,z}$.
The Hamiltonian of the critical Ising chain and of the XX chain correspond to (\ref{xytext}) with $\alpha=\lambda=1$ and $\alpha=\lambda=0$ respectively.
 Performing a Jordan-Wigner transformation, $H_{\textrm{\tiny XY}}$ is mapped into a chain of free fermions; hence the critical Ising chain and the XX chain (and therefore the numerical data of the two panels of Fig.\,\ref{fig:bosonexcited}) correspond to two different free fermionic models. 

\begin{figure}[t!]
\vspace{.2cm}
\hspace{-1.1cm}
\centering
\includegraphics[width=1.05\textwidth]{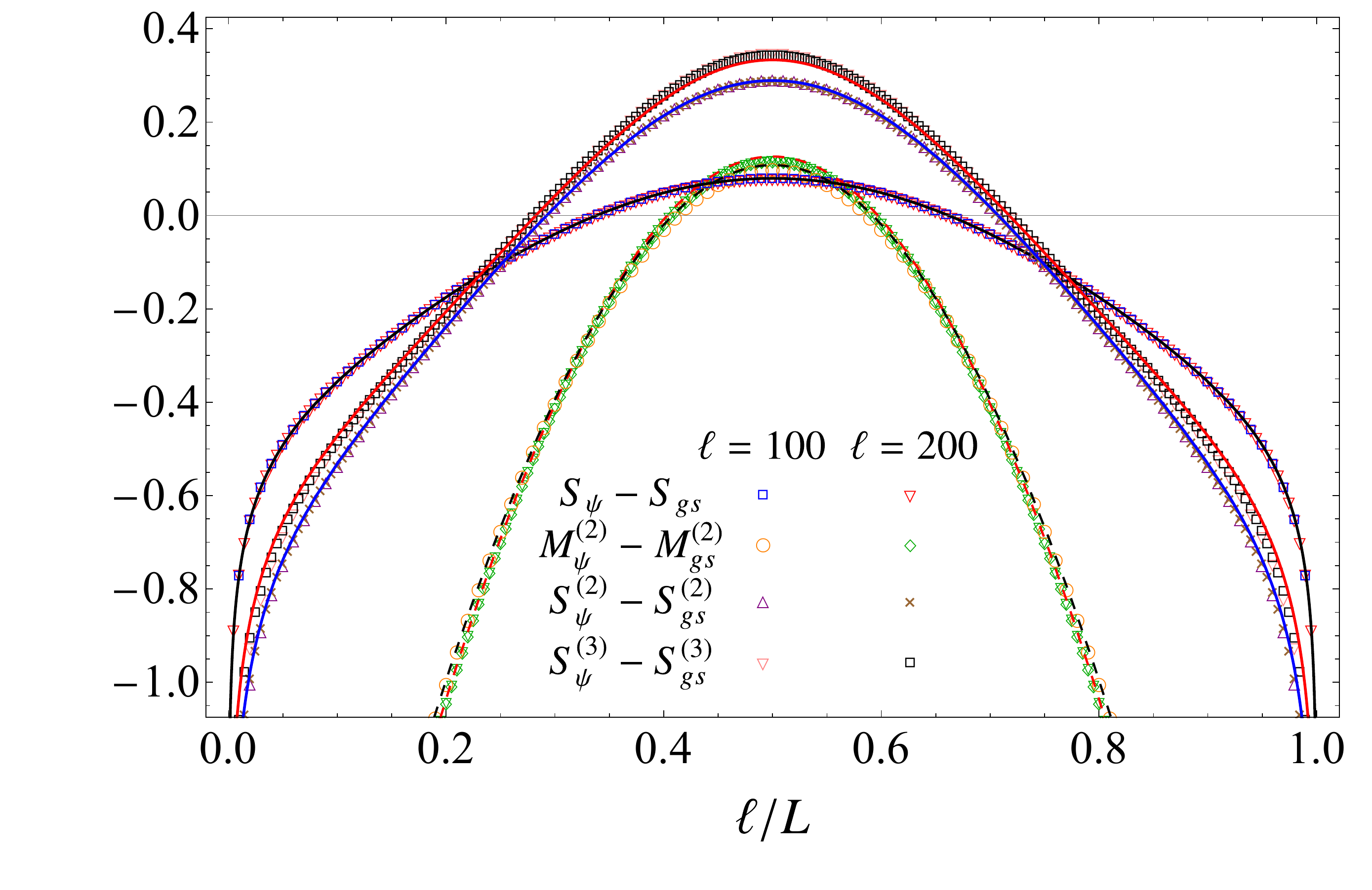}
\vspace{-.3cm}
\caption{
Changes in the monotones as function of $\ell/L$ for moving from the ground state to the excited state of the Ising chain corresponding to $\mathcal{O}=\psi$ in the continuum limit.
The black, the blue and the red solid curves are obtained from (\ref{Renyi-exc}), (\ref{entropy-cap-exc}), (\ref{F_O-gamma}) and (\ref{F-function-def}), while the dashed curves are obtained using (\ref{capacityequalentropy}) and (\ref{entropy-cap-exc}), (\ref{F_O-gamma}) and (\ref{F-function-def}) into (\ref{Mdefinition}) with $\ell=100$ (black dashed curve) and $\ell=200$ (red dashed curve). The non universal constants have been fixed as specified in in Sec.\,\ref{subsec:compactboson} and in Sec.\,\ref{subsec:freefermion}.}
\label{fig:deltaS-deltaM}
\end{figure}

In Fig.\,\ref{fig:bosonexcited}, the bottom curve is   
$S_{\psi,A }-C_{\psi,A }$ for a block of $\ell$ consecutive sites in a periodic Ising chain made by $L$ spins. The data points are obtained through the procedure detailed in \cite{Alcaraz:2011tn, Berganza:2011mh} and compared with (\ref{S-minus-C-current}) evaluated for $\gamma=1/2$ (blue curve), finding a very good agreement.
In this case the additive constants have been fitted, finding $-c_1'\simeq 0.479$ (as already found in \cite{latorre2003ground,Cardy:2007mb}) and $[\partial^2_n(\log c_n)]\big|_{n=1}\simeq 0.385$.

\subsection{Constraints for majorization from monotones}
\label{subsec:thoughtexpCFT}

Let us consider the results in the light of our exploratory approach to majorization  
outlined in Sec.\,\ref{subsec:thougthexp}. As anticipated, we focus on the periodic Ising chain, which maps into a chain of free fermions after a Jordan-Wigner transformation, with a finite dimensional Hilbert space. We probe possible majorization between the ground state and an excited state. The excited state corresponds
to the state created by the operator $\mathcal{O} = \psi$ in the continuum CFT. To see if majorization is ruled out, we consider first the changes in the entanglement monotones $S$, $S^{(2)}$, $S^{(3)}$ and $M^{(2)}(\cdot,1)$ for the interval $A$ of length $\ell$. 
The changes in entanglement entropy $\Delta S_A = S_{A,\psi}-S_{A,gs}$, in R\'enyi entropies $\Delta S^{(2)}_A = S^{(2)}_{A,\psi}-S^{(2)}_{A,gs}$ and $\Delta S^{(3)}_A = S^{(3)}_{A,\psi}-S^{(3)}_{A,gs}$ and in the second moment of modular Hamiltonian $\Delta M_{2,A} = M^{(2)}_{A,\psi}(\cdot,1)-M^{(2)}_{A,gs}(\cdot,1)$ increase
monotonically from $\ell/L=0$ to a maximum value when the subsystem is half the size of the chain, $\ell /L=0.5$, as shown in Fig.\,\ref{fig:deltaS-deltaM}. 
Assuming that we want to convert the ground state to an excited state by a process (such as $\ket{gs} \underset{\textrm{\tiny LOCC}}{\longrightarrow} \ket{\psi}$) 
that involves the order $\ket{\psi}\succ \ket{gs}$,
the entanglement monotones cannot increase. Notice that $\Delta S_A,\, \Delta S^{(2)}_A,\, \Delta S^{(3)}_A$ and $\Delta M_{2,A}$ start as negative for small subsystem size, but all of
them become positive as the size increases, thus ruling out $\ket{\psi}\succ \ket{gs}$ and the LOCC process. Interestingly, $\Delta M_{2,A}$ becomes positive only at $\ell /L\approx 0.403 $, $\Delta S_A$ at $\ell /L\approx 0.337 $, $\Delta S^{(2)}_A$ at $\ell /L\approx 0.292 $ and $\Delta S^{(3)}_A$ at $\ell /L\approx 0.282 $. Thus, in this case the monotone $S^{(3)}_A$ gives the stronger constraint, ruling out $\ket{\psi}\succ \ket{gs}$ in the range $\ell /L \in [0.282,0.718]$. In the opposite transition from the excited state to the ground state with $\ket{gs}\succ \ket{\psi}$, the signs are reversed with $\Delta M_{2,A}\geq 0$ giving a stronger constraint for ruling out
the $\ket{gs}\succ \ket{\psi}$ in the regime $\ell /L \in [0,0.403] \cup [0.597,1]$.

The general picture is consistent with the naive expectation that it becomes relatively harder to connect two pure states by means of LOCC as the difference between the sizes of the two subsystems $A$ and $B$ becomes smaller. A heuristic argument is that the dimension of the space of allowed local operations decreases as $|d_A-d_B|$ gets smaller. If one e.g. looks at unitaries, the total dimension of the group of local unitaries $U(d_A)\times U(d_B)$ is $d_A^2+d_B^2$ which is minimized for $d_A=d_B$. Beyond this, it seems hard to extract general lessons from our preliminary analysis.

\section{Summary and outlook}
\label{sec:conclusions}

{\bf Partial orders among quantum states.} The sequences of inequalites coming from the monotones introduced in Sec.\,\ref{sec:resource_monotones} can be used to define a partial order among quantum states.  For example, starting with the sequence $M^{(n)} (\rho;n-1)$, we could define $\rho\succ_n\sigma$
if all monotones up to degree $n$ obey $M^{(k)}(\rho;n-1) \leq M^{(k)}(\sigma;n-1)$, and then define $\rho \succ_\infty \sigma$ by $\lim_{ k\rightarrow\infty} \rho \succ_k \sigma$. With the extremal sequence $P^{(n)}_{E} (\rho)$
we have even more freedom, due to the infinite range of parameters $\vec{a}=(a_1,\ldots ,a_{\lceil\frac{n}{2}\rceil-1})$, where $\lceil x\rceil$ is the smallest integer greater than or equal to $x$. As we did for the order $n=3,4$ cases, we could first express $P^{(n)}_E$ in terms of $M^{(k)}$ up to order $n$, and then find the 
value of parameters $\vec{a}_0$ that produces the tightest inequality. The extremal parameter vector $\vec{a}_0$ can be expressed as a combination of $M^{(k)}$, leading to a non-linear expression for $P^{(n)}_{E,\vec{a}_0}$ in terms of 
$M^{(k)}$. We can then define a tighter partial order $\rho \succ^E_n \sigma$ by requiring  $P^{(k)}_{E,\vec{a}_0}(\rho) \geq P^{(k)}_{E,\vec{a}_0}(\sigma)$ for all $k\leq n$, and a limit $\rho \succ^E_\infty \sigma$ as in the above. 

It is an  interesting question whether any of the above partial orders is equivalent to the majorization partial order. It is easy to see in examples that $\rho \succ^E_n \sigma$ with finite $n$ does not imply majorization, even in low dimensional systems\footnote{Consider e.g. a three-dimensional Hilbert space and let's look at the first two extremal monotones, $-S$ and $P^{(2)}_E$. Suppose $\rho$ has eigenvalues $0.49$, $0.41$ and $0.1$, and that $\sigma$ has eigenvalues
$0.5$, $0.3$ and $0.2$. Then $\rho$ does not majorize sigma, but the difference between $-S$ is $\sim 0.084$ and between 
$P^{(2)}_E$ is $\sim 0.256$, which are both positive.}. It is also known that monotonicity of R\'enyi entropies is not sufficient to imply majorization. All in all, our best hope is
perhaps that the order $\succ^E_\infty$ is strong enough to imply majorization. The hope is based on the equivalence (the Hardy-Littlewood-P\'olya inequality of majorization \cite{HLPbook})
\bea\label{majconvex}
&&\rho \succ \sigma\quad \Leftrightarrow \quad \Tr f(\rho) \geq \Tr f(\rho) \\
&& \textrm{for all real valued continuous convex functions}\  f \ \textrm{defined on} \ [0,1]. \nonumber 
\eea
It is known that
convex functions can be expanded {\em e.g.} as a series of Bernstein polynomials. One could hope that a series expansion based on the extremal sequence $P^{(n)}_{E,\vec{a}_0}$ would be possible with all coefficients being non-negative, or in other words, that non-negative linear combinations of the extremal polynomials are dense in the space of real values continuous functions. 
Then $\succ^E_\infty$ would imply that the  inequality on the right hand side of (\ref{majconvex}) is satisfied, and be equivalent to majorization. The above mentioned partial order based on monotones and majorization partial order can also be formulated as orderings generated by cones, this concept is discussed {\em e.g.} in \cite{MaOlAr}. We present this reformulation
in Appendix \ref{app:coneappendix}.

Another interesting open question is a version of a moment problem.  For a state $\rho$ in a $d$-dimensional system, it is known that $d-1$ first R\'enyi entropies $S^{(k)}(\rho)$, $k=1,\ldots ,d-1,$ are sufficient to determine the spectrum of $\rho$.
The explicit steps and comments on the history of this observation can be found in \cite{boes2020}. 
The proof is actually straightforward, as the R\'enyi entropies yield a basis for the symmetric polynomials in the eigenvalues, which can then directly be used to compute ${\rm det}(\lambda-\rho)$ whose roots are the spectrum of $\rho$.
Suppose that one knows all $M^{(k)}$ or all $P_E^{(k)}$ up to order $n$, for a state $\rho$ in a $d$-dimensional system. Is it possible to derive the spectrum of $\rho$ for
some value of $n$ or in the limit $n\rightarrow \infty$? If not, can even a partial spectrum be calculated? 

{\bf Inequalities for R\'enyi entropies.} In this paper, we considered convex 
functions of the type ${\rm Tr}[\rho F(\log\rho) ]$,
which have the feature that expressions of this type include
entanglement entropy and moments of shifted modular Hamiltonian.
We could also have considered even simpler functions
of the type ${\rm Tr}[F(\rho)]$ with polynomials $F$ which are 
essentially linear combinations of R\'enyi entropies
with integer powers. The function $F$ is convex when $F''\geq 0$, and 
we can once again find a complete basis of
extremal polynomials. In this case, we need to find positive 
polynomials on the interval $[0,1]$ and these are given
by linear combinations (with non-negative coefficients) of polynomials 
of the form $\prod_i (x-a_i)^2$ or $x(1-x) \prod_i
(x-a_i)^2$ with in either case $a_i\in [0,1]$. Notice that linear 
combinations of such polynomials can yield a polynomial of a lower 
degree, and one therefore has to be a bit careful to find all 
polynomials of a particular degree.
For example, the most general linear $F''$ is a non-negative linear 
combination of $x$ and $1-x$, and since
$x=x(1-x) + x^2$ and $1-x= x(1-x)+ (1-x)^2$ these can indeed both be 
written as linear combinations of
the extremal basis polynomials of higher degree.
If $F''=x$ then the monotone is simply ${\rm Tr}[\rho^3]$, and for 
$F''=1-x$ we obtain the new monotone
$\frac{1}{2} {\rm Tr}[\rho^2] -\frac{1}{6} {\rm Tr}[\rho^3]$. Going to 
higher degrees, one could in principle
obtain infinite families of monotones. This family would then 
presumably be complete, in other words, imposing all of them would be
equivalent to state majorization. It would be interesting to study 
this in more detail.

{\bf Inequalities for quantum field theories.} As we discussed, to define
 majorization in quantum field theory directly
 requires one to introduce an explicit UV cutoff. It is however not 
 obvious that this is a natural construction
 as the notion of majorization may depend sensitively on the choice of 
 UV cutoff. Since relative entropy, as opposed
 to entanglement entropy, is well defined for continuum quantum field 
 theories, it is tempting to think that only
 a relative version of majorization applies in continuum quantum field 
theories. This leads one to consider
 the inequality $S(\rho_1||\sigma) \geq S(\rho_2 ||\sigma)$ in quantum 
 field theory. This inequality would follow
 if there exists a quantum channel ${\cal N}$
which maps
 $\rho_1$ to $\rho_2$ and maps $\sigma$ to itself. For general quantum 
 channels monotonicity of relative entropy is the
 statement that $S(\rho||\sigma) \geq S({\cal N}(\rho) || {\cal 
 N}(\sigma) )$. Similar monotonicity properties are satisfied by R\'enyi relative entropies (R\'enyi divergences). In \cite{Bernamonti:2018vmw} monotonicity
constraints $S_\alpha (\rho (0)||\gamma_\beta) \geq S_\alpha (\rho (t)||\gamma_\beta)$ were investigated as additional "second laws" constraining the 
off-equilibrium dynamical evolution $\rho (t) = {\cal N}_t (\rho (0))$ (where the Gaussian state is a fixed point $\gamma_\beta = {\cal N}_t (\gamma_\beta)$) in 2d CFTs and their gravity duals.

 To  speculate, one could try an 
 alternative method to construct inequalities and proceed as follows.
 In quantum field theory, the definition
 of relative entropy also requires a choice of algebra, typically 
 associated to a subregion. If we can replace the
 action of the channel on states by the adjoint action ${\cal N}^{\ast}$ 
 on the algebra $\mathcal{A}$, defined via ${\rm Tr}({\cal N}(\rho) O)={\rm Tr}(\rho 
 {\cal N}^{\ast}(O))$ for all $\rho$ and $O\in \mathcal{A}$, then we can also write 
 $S_{\cal A}(\rho||\sigma) \geq
 S_{{\cal N}^{\ast} {\cal A}} (\rho || \sigma )$. This inequality 
 follows from a corresponding operator inequality
 for the relative modular operators, $\Delta_{\rho|\sigma;{\cal A}} \leq 
 \Delta_{\rho|\sigma;{\cal N}^{\ast}{\cal A}}$.
 We could take this inequality to be the fundamental inequality which 
 defines a quantum field theory counterpart
 of majorization. It would define a partial ordering for algebras 
 (given two states), rather than for states.
 By applying operator monotones\footnote{A complete classification of 
 operator monotones is known, besides the linear
 function $f(x)=x$ all other operator monotones are non-negative 
 (possibly infinite) linear combinations of functions of the
 type $f(x)=x/(x+s)$ with some $s>0$ \cite{loewner}.} we could then derive additional 
 inequalities in the spirit of the paper. We leave a further 
 exploration of these ideas to future work.

{\bf Additional open questions \cite{Wilming_private_communication}.}
It is worth asking whether it is possible to generalize from \cite{boes2020} the Result 1 (a sufficient
condition for approximate state transition) or the Result 2 (bounds on smoothed min and max entropies)
to involve higher cumulants than entropy and variance. As far as we can see, 
these results rely on the Cantelli-Chebyshev inequality for deviations of a random variable from its mean value,
with the bound depending on the variance. One could try to employ a refined inequality involving higher cumulants as well,
and then try to construct extensions of the abovementioned results.  Finally, it would be interesting to explore if our approach to resource monotones has interesting applications in other quantum resource theories. In particular, 
it would be interesting to study resource monotones in the context of (un)complexity and its connections to quantum gravity.

\section*{Acknowledgments}

We thank Henrik Wilming for useful comments and questions. 
GDG, EKV and ET acknowledge Galileo Galilei Institute for warm hospitality and financial support 
(through the program {\it Reconstructing the Gravitational Hologram with Quantum Information})
during part of this work.  JdB is supported by the European Research Council under
the European Union's Seventh Framework Programme (FP7/2007-2013), ERC Grant
agreement ADG 834878. 
EKV's research has been conducted within the framework of InstituteQ - the Finnish Quantum Institute, and  ET's research within the framework of the Trieste Institute for Theoretical Quantum Technology (TQT).

\appendix

\section{Details on the construction of entanglement monotones}
\label{App:DetailsSec2}

In this appendix, we provide additional discussions on the construction of entanglement monotones for pure states detailed in Sec.\,\ref{subsec:concavity}.
In particular, in Appendix \ref{subsec:highercumulants} we describe a general procedure for obtaining entanglement monotones from the cumulants of the modular Hamiltonian, while in Appendix \ref{app:Proof} we report a detailed proof of the Theorem \ref{thm_Pol}. 

\subsection{Higher cumulants and pure state entanglement monotones}
\label{subsec:highercumulants}

In this subsection we point out that there are many ways to construct generalizations of the function $M$ defined in (\ref{Mdefinition}) to concave quantities (entanglement monotones) 
\be
\label{property Mn}
 M^{(n)}(\boldsymbol{\gamma})=\textrm{Tr}\big[\rho F_n( \rho )\big]\,,
\ee
involving higher cumulants or moments of modular Hamiltonian, where $x\,F_n(x)$ is a concave function of a single real variable $x\in [0,1]$. 
With a slight abuse of notation, we denote the quantity in (\ref{property Mn}) like the one in (\ref{Mn Tr Fn}) 
although the former one is more general 
because of the occurrence of the parameters $\boldsymbol{\gamma}=(\gamma_0,\dots,\gamma_{n-1} )$, as discussed below.


Let us list explicitly the relation (\ref{momentstocumulants}) of moments $\mu_n$ and the cumulants $C_n$ for first moments: 
\bea
\label{M1-from-C}
\mu_1 & = & C_1\,,
\\
\mu_2 & = & C_1^2 + C_2\,,
\\
\mu_3 & = & C_1^3 +3\, C_1\, C_2 + C_3\,,
\\
\mu_4 & = & C_1^4 +6\, C_1^2\, C_2 +3\, C_2^2 +4\, C_1\, C_3 + C_4 \,.
\label{M4-from-C}
\eea
One can invert to obtain the relation of cumulants to moments, for example
\bea
C_1 & = & \mu_1\,,
\\
C_2 & = & \mu_2 - \mu_1^2\,,
\\
\label{C3-from-M}
C_3 & = & \mu_3 - 3\, \mu_2\, \mu_1 + 2\, \mu_1^3\,,
\\
\label{C4-from-M}
C_4 & = & \mu_4 - 4\, \mu_3\, \mu_1 + 12\, \mu_2 \, \mu_1^2 - 6\, \mu_1^4 - 3\, \mu_2^2 \,.
\eea

In our case, given a density matrix $\rho$, we are interested in the moments of modular Hamiltonian
$\mu_n=\textrm{Tr}\big(\rho (-\ln \rho )^n\big)=\textrm{Tr}\big( \rho K^n\big)$, where $K=-\ln \rho $ and likewise for the cumulants. The entropy is
$C_1=\mu_1=S$ and the capacity is $C_2=C$.

Instead of considering moments of shifted modular Hamiltonian $\Tr [\rho (-\ln \rho +b_n)^n]$ with $b_n>n-1$ as concave generalizations of $M$, there are more general constructions.
Here is one way to proceed.  For $n=1$ let us consider
\be
M^{(1)}
\equiv  C_1 + a_1
=
\mu_1 + a_1\,.
\ee
Up to an additive constant, $M^{(1)}$ is the entropy. 
%
%
For $n=2$ let us consider
\be
\label{f2-step1}
M^{(2)} 
\equiv  C_2 + \big( \mu_1 + a_1 \big)^2 + a_2
=\,
\mu_2 + 2a_1 \mu_1 + a_1^2 + a_2
=\,
\mu_2 + \gamma_1 \mu_1 + \gamma_0\,,
\ee
which depends on the two parameters $\gamma_0$ and $\gamma_1$.
Comparing (\ref{f2-step1}) with (\ref{Mdefinition}), we find that, when $\gamma_1=2$ and $\gamma_0=1$, $M^{(2)}(\gamma_0,\gamma_1)$ reduces to $M$ in (\ref{Mdefinition}). Notice that the polynomial combination of capacity and entropy in $M^{(2)}$ reduces to a linear combination of moments $\mu_n$ of modular Hamiltonian.
%

Likewise, for $n=3$, if we start with a polynomial expression
\be
\label{f3-step1}
M^{(3)}
\,\equiv\,
C_3 
+ \frac{3}{2} \big( \mu_2 + \mu_1 + b_1 \big)^2
-2 \big( \mu_1 + c_1 \big)^3
+ \alpha_2 \big(\mu_2 + a_2\big)^2
+ \alpha_1 \big(\mu_1 + a_1\big)^2 \ ,
\ee
with the coefficients $3/2$ and $-2$ in (\ref{f3-step1})
we can cancel the terms $\mu_2 \mu_1$ and $\mu_1^3$ respectively 
(see (\ref{C3-from-M})).
Imposing the vanishing of the coefficients of $\mu_2^2$ and $\mu_1^2$ in (\ref{f3-step1})
leads to
\be
\alpha_2 = - \frac{3}{2}\,,
\;\;\qquad\;\;
\alpha_1 = - \frac{3}{2} + 6\, c_1\,.
\ee
The point is that we are again lead to a linear combination of moments,
\be
M^{(3)}(\gamma_0,\gamma_1,\gamma_2)
\,=\,
\mu_3 
+ \gamma_2\,  \mu_2 
+ \gamma_1\, \mu_1 
+ \gamma_0\,,
\ee
with three parameters
\be
\gamma_2 \equiv 3 \big(b_1 - a_2\big) \,,
\qquad
\gamma_1 \equiv 3 \big(b_1 -2 c_1^2 + 4 c_1 a_1 - a_1\big)\,,
\qquad
\gamma_0 \equiv -\frac{1}{2}  \big( 4 c_1^3 - 3 b_1^2 - 12 c_1 a_1^2 + 3a_2^2 + 3 a_1^2 \big) \ .
\ee
%
Thus, since $\textrm{Tr} \rho = 1$, we have 
\be\label{M3}
M^{(3)} (\gamma_0,\gamma_1,\gamma_2)
\,=\,
\textrm{Tr}\big[ \rho \, F_3(\rho )\big] + \gamma_0\,,
\;\;\qquad\;\;
F_3(\rho ) \equiv
- \,( \ln \rho  )^3 
+ \gamma_2 \,( \ln \rho )^2
- \gamma_1 \ln \rho\,.
\ee
It is easy to see that there is a range of parameters $\gamma_1,\gamma_2$ such that $f_3(x) =x\, F_3(x)$ is concave in the unit interval. The difference to the third
moment of shifted modular Hamiltonian 
$\textrm{Tr}[\rho (-\ln \rho + b_3)^3]$ is that it contains a single parameter $b_3$. It is a special case of (\ref{M3}) with
\be
  \gamma_j =(-1)^j{3 \choose 3-j}b^{3-j}_3, \ \ \textrm{where}\ j=0,1,2.
\ee
We are thus lead to consider more general linear combinations of moments as an alternative generalization of the measure $M$ (of which the moments of shifted modular Hamiltonian (\ref{lincombmoments}) are a special case):
\bea
\label{fn-step1}
M^{(n)}(\boldsymbol{\gamma})&\equiv & \mu_n+\sum_{j=1}^{n-1} \gamma_j \mu_j+\gamma_0
\\
&=&
\textrm{Tr}\bigg[ \rho \bigg((-1)^n\ln^n \rho+\sum_{j=1}^{n-1} \gamma_j (-1)^j\ln^j \rho \bigg)\bigg] + \gamma_0 = \textrm{Tr}[\rho F_n (\rho)] +\gamma_0\,,
\eea
where $\boldsymbol{\gamma}=\big(\gamma_0,\dots,\gamma_{n-1}\big)$. The range of the parameters $\gamma_{j\neq 0}$ can be chosen so that $x F_n(x)$ is a concave function for $x\in [0,1]$ and $\gamma_0$ so that $M^{(n)}(\boldsymbol{\gamma})\geq 0$. It is
also clear that these measures can be computed by using the R\'enyi entropies or $\textrm{Tr}(\rho^\alpha)$ as a generating function, by applying a combination of derivatives
$\sum_k \gamma_k (-1)^k\, \partial^k_\alpha$ and setting $\alpha =1$. 

To summarize, there are many ways to construct infinite sequences of entanglement monotones, generalizing $M$, and compute them from R\'enyi entropies. In the end, the desiderable
construction depends on the specific physical motivation.

\subsection{Proof of Theorem \ref{thm_Pol}}\label{app:Proof}

According to Theorem \ref{thm_Pol},  all positive semidefinite polynomials $G(y)$ on the negative half-line $y\in (-\infty ,0]$ have the following form. 
For polynomials $G(y)$ of degree $2d$ (with $d\geq 1$) they are linear combinations with positive coefficients of polynomials of the form 
$G_{\vec{a}}(y)= \prod^d_{i=1} (y+a_i)^2$, with all $a_i\geq 0$. For polynomials of degree $2d+1$ they are linear combinations with positive coefficients of polynomials of the form
$G_{\vec{a}}(y)=-y~\prod^d_{i=1} (y+a_i)^2$, with again all $a_i\geq 0$.

\begin{proof} Consider first a positive polynomial on the entire real line. 
It can be written as $\prod_i (x-x_i)$ where the roots can be complex. There cannot be an isolated real root, as then the
polynomial would be negative somewhere in a small neighborhood of that real root. Similarly, there can not be an odd degeneracy
of a real root, because once more the function would be negative in a small neighborhood. Therefore all real roots need
to have even degeneracy. So the polynomial is of the form $q(x)^2 r(x)$ where $q(x)$ is real and all other (complex) roots
make up $r(x)$. Because the polynomial must be real, the roots must come in complex conjugate pairs. Therefore $r(x)=|s(x)|^2$
where $s(x)$ contains all the roots in (say) the complex upper half plane. We can write $s=s_0+is_1$ where $s_0$ and $s_1$
are the real and imaginary parts. Then we see that the polynomial is of the form $q(x)^2 s_0(x)^2 + q(x)^2 s_1(x)^2$ which shows
that a positive polynomial on the real line must be sum of two squares.

Now consider a polynomial $p(x)$ which is positive on the negative real axis. We can decompose these polynomials
again in roots. Negative real roots need to appear with even multiplicity and positive real roots can appear
with any multiplicity. Factors of the type $|x-u|^2$ with complex $u$ are positive definite and can appear
without restriction. Consider now polynomials of the form $f-xg$ with $f$ and $g$ positive on the entire real
axis. These polynomials form a ring (so if you multiply two it will still be of this form). The claim is that $p$
is also of this form (which is manifestly non-negative on the negative real axis). We simply need to check that
all factors of $p$ are of this form. A factor with negative real roots with even multiplicity is of the form
$f+x\cdot 0$ as it is positive on the entire real axis. A factor with a positive real root can be written as
$u-x$ which is also of the required form (with $f=u$ and $g=1$). Finally, factors $|x-u|^2$ are positive definite
on the entire real axis and therefore also of the form $f+x\cdot 0$. Using the previous characterization of 
positive polynomials we conclude that $p(x)$ can be written as
\be 
p(x)=q(x)^2 + r(x)^2 - x [s(x)^2 +t(x)^2]\,,
\ee
for some polynomials $q,r,s,t$. This result is due to P\'olya-Szeg\"o \cite{PolSze}.
It remains to show that each of these terms can be written as a liner combination of extremal polynomials.
Look e.g. at $q(x)^2$ and expand it in the form
\be
q(x)^2=\prod_{i,j} (x-u_i)^2 [(x-b_j)^2 +c_j^2]^2\,.
\ee
This is a sum of terms of the form $a_-(x)^2 a_+(x)^2$ with positive coefficients, where $a_-(x)$ has zeroes
on the negative real axis, and $a_+(x)$ has zeroes on the positive real axis. We can further expand $a_+(x)^2$
as a power series with alternating coefficients. This shows that $q(x)^2$ is indeed a linear combination of extremal
polynomials with non-negative coefficients. The same result applies for the other three terms in $p(x)$. This completes the proof.
\end{proof}

\section{Capacity of entanglement in fermionic chains: constant term}
\label{app:FHstory}

In this appendix we exploit the method of \cite{JinKorepin04} to determine the non-universal constant occurring 
in the expression of the capacity of entanglement for a block $A$ made by $\ell$ consecutive sites 
in the infinite free fermionic chain. 

The Hamiltonian of the free fermionic chain on the line reads
\be 
\label{Ham FF generich}
H=-\,\sum_{n=-\infty}^{+\infty}
\!  \bigg[\,
\hat{c}^\dag_n \,\hat{c}_{n+1}+\hat{c}^{\dag}_{n+1}\,\hat{c}_n
-
2 h \bigg(\hat{c}^\dag_n \,\hat{c}_{n}-\frac{1}{2}\bigg)\bigg]\,,
\ee
where 
$\{\hat{c}_n^\dag,\hat{c}_m^\dag\}=\{\hat{c}_n,\hat{c}_m\}=0$ and $\{\hat{c}_n,\hat{c}^\dag_m\}=\delta_{m,n}$ 
and $h$ is the chemical potential. The ground state of this model is a Fermi sea with a Fermi momentum $k_{\textrm{\tiny F}}=\arccos|h|$. 
A Jordan-Wigner transformation maps the Hamiltonian (\ref{Ham FF generich}) into the Hamiltonian of the XX spin chain with magnetic field $h$.

The Toeplitz nature of the correlation matrix restricted to $A$ for this model 
allows us to write the large $\ell$ expansion of $\ln\textrm{Tr}\rho_A^n$ 
through the Fisher-Hartwig conjecture, which provides the asymptotic behaviour of the Fredholm determinant of Toeplitz matrices for large matrix size. 
The result reads \cite{JinKorepin04}
\be
\label{logZn_FHcorr}
\ln\textrm{Tr}\rho_A^n
\,=\,
\frac{1}{6}\left(
\frac{1}{n}-n\right)\ln\ell+\ln c_n
\,=\,
\frac{1}{6}\left(
\frac{1}{n}-n\right)\big[\ln \ell +\ln (2|\sin(k_{\textrm{\tiny F}})|)\big]
+ \Upsilon(n)
+ o(1)\,,
\;\;
\ee
where
\be 
\label{Upsiloncorr}
\Upsilon(n)= \mathrm{i} n\int_{-\infty}^\infty 
\big[ \tanh(\pi w)-\tanh(\pi n w) \big]
\ln\left(
\frac{\Gamma\big(\frac{1}{2}+\mathrm{i}w \big)}{\Gamma\big(\frac{1}{2}-\mathrm{i}w \big)}
\right)\,.
\ee
By introducing $G_n(w)\equiv n [\tanh(\pi w)-\tanh(\pi n w)]$, we need
\be
\label{der1and2ofFwrtn}
\partial_n  G_n(w)\big|_{n=1}
=
-\frac{\pi w}{\cosh^2(\pi w)}\,,
\;\;\qquad\;\;
\partial^2_n  G_n(w)\big|_{n=1}
=
-\frac{2\pi w}{\cosh^2(\pi w)}+\frac{2\pi^2 w^2}{\cosh^2(\pi w)}\tanh(\pi w)\,.
\ee
Plugging (\ref{der1and2ofFwrtn}) into the derivatives of (\ref{Upsiloncorr}), one finds  
the corrections to the entanglement entropy and the capacity of entanglement due to $\Upsilon(n)$ in (\ref{logZn_FHcorr}). 
This gives
\be
\label{SA_FHcorr}
S_A
=
\frac{1}{3}\ln \ell + \frac{1}{3}\ln (2|\sin(k_{\textrm{\tiny F}})|)  -\Upsilon'(1) +\dots\,,
\ee
which has been obtained in \cite{JinKorepin04}, and 
\be
\label{CE_FHcorr}
C_A
=
\frac{1}{3}\ln \ell + \frac{1}{3}\ln (2|\sin(k_{\textrm{\tiny F}})|) + \Upsilon''(1) +\dots\,,
\ee
where the constant $\Upsilon'(1)$ and $\Upsilon''(1)$
can be evaluated numerically 
from (\ref{der1and2ofFwrtn}) in (\ref{Upsiloncorr}), 
finding $-\Upsilon'(1)\simeq 0.495018$ and $\Upsilon''(1)\simeq 0.303516$.
The subleading terms that we have neglected are vanishing as $\ell \to \infty$ 
and some of them have been computed in \cite{CalabreseEssler_10_XXchain} through the generalised Fisher-Hartwig conjecture.

In the main text we have mainly considered 
(\ref{SA_FHcorr}) and (\ref{CE_FHcorr}) in the case of  vanishing chemical potential, i.e. for $h=0$, which means $k_{\textrm{\tiny F}}= \frac{\pi}{2}$
(see e.g. all the figures in Sec.\,\ref{CFT}).

\section{Sequences of monotones and an ordering generated by a cone}
\label{app:coneappendix}

In the Sec.\,\ref{sec:conclusions} we discussed partial orders among quantum states based on the sequences of our new monotones. Here we rephrase this question in terms of an ordering generated by a cone \cite{MaOlAr}. By diagonalizing a density matrix, the space of quantum states in $d+1$ dimensions can be identified with
the standard simplex $\Delta^d \subset \mathbb{R}^{d+1}$. Our monotones can be thought as convex functions
\be
M: \Delta^d \rightarrow \mathbb{R}, \ M(\xv ) \equiv \sum^d_{i=0} x_i F(\ln x_i )\,,
\ee
such that 
\be
G(y)\equiv F''(y)+F'(y)\,,
\ee
with $y\equiv \ln x$ is a non-negative polynomial of the order $n-1$ on the negative half-line $(-\infty ,0]$. Thus $F$ is polynomial of degree $n$.
The above functions form a convex cone $\Ccal_n$. For our purposes we may identify functions that differ by a constant. Every function $M$ is a linear combination with positive
coefficients of the extremal rays of the cone. Let $\Phi_n$ denote the set of extremal rays. We say that the set $\Phi_n$ {\em generates} the cone $\Ccal_n$.
 For $n=1$ there is only one extremal ray with $F(y) = y$, thus $\Phi_1=\{\sum_i x_i \ln x_i\}$. For $n>1$ we found in Theorem 1 that 
the extremal rays correspond to functions $F_{\vec{a}}(y)$ with $G_{\vec{a}}(y)=F_{\vec{a}}''(y)+F_{\vec{a}}(y)$ of the form
\be
  G_{\vec{a}}(y) \equiv \left\{ \begin{array}{l} \prod^{k}_{i=1} (y+a_i)^2 \, ,\; \,\,\,\,\,\,\,\, a_i \geq 0\ \ \forall i    \,, \ {\rm when}\ n-1=2k \geq 2\,, \\                                                                  -y \prod^k_{i=1} (y+a_i)^2 \,,\; a_i \geq 0\ \ \forall i  \,, \ {\rm when}\ n-1=2k+1 \geq 3\,.  \end{array} \right.
\ee
Thus in both cases the generating set $\Phi_n$ is infinite, parameterized by vectors ${\vec{a}}$ in the hyperorthant of $\mathbb{R}^k$, so the cone $\Ccal_n$ is infinitely generated. Finally 
let us define the cone 
\be
        \Ccal = \cup^\infty_{n=1} \Ccal_n\,,
\ee
which is infinitely generated by the set
\be
     \Phi = \cup^\infty_{n=1} \Phi_n \, .
\ee

Majorization $\rho \succ \sigma$ in $d+1$ dimensions can be identified with majorization of vectors ${\bs \lambda} \succ {\bs \mu}$ or  majorization partial order in the 
standard simplex $\Delta^d$. Now alternatively \cite{MaOlAr} we can define an ordering $\succ_{\Ccal}$ based on the function set $\Phi$ that generates the cone $\Ccal$:
\be
  {\bs \lambda} \succ_{\Ccal} {\bs \mu}  \Leftrightarrow M({\bs \lambda}) \geq M({{\bs \mu}})  \ \forall M\in \Phi \, .
\ee
The inequality on the right hand side is satisfied by every function $M\in \Ccal$, the definition just uses the most economical set of functions generating the cone.
The ordering $\succ_\Ccal$ is said to be {\em generated} by the cone $\Ccal$. Such (partial) orderings come with a basic problem. Define the {\em completion} $\Ccal^*$ of $\Ccal$, the cone
of all functions that respect the ordering $\succ_\Ccal$
\be
 \Ccal^* \equiv \left\{ f: \Delta^d \rightarrow R~||~\xv \succ_\Ccal \yv \Rightarrow f(\xv) \geq f(\yv) \right\}\,.
\ee
A basic problem is to {\em identify} the completion $\Ccal^*$ of $\Ccal$, which is an important open question for the cone defined above.

We noted the Hardy-Littlewood-P\'olya inequality of majorization
\bea
  &&\rho \succ \sigma \Leftrightarrow \Tr g(\rho) \geq \Tr g(\sigma) \nonumber \\
&&\textrm{for all continuous convex functions } g:[0,1]\rightarrow \mathbb{R} \ . \nonumber
\eea
We could alternatively interpret this as another ordering generated by a cone. Define the convex cone
\be
\Ccal_{HLP} = \left\{ f: \Delta^d \rightarrow R~||~f(\xv) = \sum^d_{i=1} g(x_i)\ , g:[0,1]\rightarrow \mathbb{R} \textrm{ is continuous, convex} \right\}\,,
\ee
and define the ordering generated by the cone $\Ccal_{HLP}$,
\be
\xv \succ_{\Ccal_{HLP}} \yv \Leftrightarrow f(\xv) \geq f(\yv) \ \forall f\in \Ccal_{HLP}\,.
\ee
Then by the HLP inequality we can identify majorization with the cone ordering,
\be
\rho \succ \sigma \Leftrightarrow {\bs \lambda}\succ_{\Ccal_{HLP}} {\bs \mu} \ .
\ee
Now we can ask if $\Ccal_{HLP}$ could be in the completion $\Ccal^*$ of $\Ccal$ or at least well approximated by $\Ccal^*$. This would mean that $\succ$ and $\succ_\Ccal$ are equivalent. 
We have thus reformulated the question posed in Sec.\,\ref{sec:conclusions} as a problem of comparing orderings generated by cones. 
\bibliographystyle{nb}

\bibliography{bibTex_Capacity_Final-v3}
%
%
%

\end{document}
